\newcommand{\iNTsim}{""NT_{\siml}""}
\newcommand{\NTsim}{"NT_{\siml}"}
\newcommand{\iRappl}{""R_{\approxl}""}
\newcommand{\Rappl}{"R_{\approxl}"}
\newcommand{\iRsim}{""R_{\siml}""}
\newcommand{\Rsim}{"R_{\siml}"}
\newcommand{\itgt}{""\theta""}
\newcommand{\tgt}{"\theta"}
\newcommand{\iapproxl}{\mathbin{""\approx_{L}""}}
\newcommand{\approxl}{\mathbin{"\approx_{L}"}}
\newcommand{\napproxl}{\mathbin{"\not\approx_{L}"}}
\newcommand{\inapproxsbot}{{""\not\approx_{S}\!\bot""}}
\newcommand{\napproxsbot}{{"\not\approx_{S}\!\bot"}}
\newcommand{\isfl}[1][L]{""\textit{sf}""_{#1}}
\newcommand{\sfl}[1][L]{"\textit{sf}"_{#1}}
\newcommand{\iLs}{""L^{\textit{sf}}""}
\newcommand{\Ls}{"L^{\textit{sf}}"}
\newcommand{\isiml}{\mathbin{""\sim_L""}}
\newcommand{\siml}{\mathbin{"\sim_L"}}
\newcommand{\nsiml}{\mathbin{"\not\sim_L"}}
\newcommand{\isims}{\mathbin{""\sim_S""}}
\newcommand{\sims}{\mathbin{"\sim_S"}}
\newcommand{\insims}{\mathbin{""\not\sim_S""}}
\newcommand{\nsims}{\mathbin{"\not\sim_S"}}
\newcommand{\iins}{\mathbin{""\in_S""}}
\newcommand{\ins}{\mathbin{"\in_S"}}
\newcommand{\inins}{\mathbin{""\not\in_S""}}
\newcommand{\nins}{\mathbin{"\not\in_S"}}
\newcommand{\ineqs}{\mathbin{""\not=_S""}}
\newcommand{\neqs}{\mathbin{"\not=_S"}}
\newcommand{\iapproxbot}{\,""\approx_L\!\bot""}
\newcommand{\approxbot}{\,"\approx_L\!\bot"}
\newcommand{\napproxbot}{\,"\not\approx_L\!\bot"}
\newcommand{\iapproxd}{\mathbin{""\approx_{\Dd}""}}
\newcommand{\approxd}{\mathbin{"\approx_{\Dd}"}}
\newcommand{\napproxd}{\mathbin{"\not\approx_{\Dd}"}}
\newcommand{\inapproxs}{\mathbin{""\not\approx_{S}""}}
\newcommand{\napproxs}{\mathbin{"\not\approx_{S}"}}
\newcommand{\iAaCc}{""\mathcal{A}[\mathcal{C}]""}
\newcommand{\AaCc}{"\mathcal{A}[\mathcal{C}]"}
\newcommand{\es}{\emptyset}
\renewcommand{\iff}{\Leftrightarrow}
\newcommand{\incl}{\subseteq}
\newcommand{\prefeq}{\mathbin{"\sqsubseteq"}}
\newcommand{\sq}[1]{[#1]}
\newcommand{\di}[1]{\langle #1 \rangle}
\newcommand{\sqq}[1]{\sq{\cdot }}
\newcommand{\ddi}[1]{\di{\cdot }}
\newcommand{\set}[1]{\{#1\}}
\renewcommand\bar[1]{\overline{#1}}
\newcommand{\act}[1]{\stackrel{#1}{\longrightarrow}}
\newcommand{\iasize}{""\alpha""}
\newcommand{\asize}{"\alpha"}
\renewcommand{\b}{\beta}
\newcommand{\e}{\varepsilon}
\newcommand{\w}{\omega}
\renewcommand{\S}{\Sigma}
\newcommand{\Aa}{\mathcal{A}}
\newcommand{\Bb}{\mathcal{B}}
\newcommand{\Cc}{\mathcal{C}}
\newcommand{\Dd}{\mathcal{D}}
\newcommand{\Oo}{\mathcal{O}}
\def\sqr#1#2{\vbox
 {\hrule height#2
  \mbox{\vrule width#2 height#1 \kern#1 \vrule width#2}%
  \hrule height#2}}
\newcommand{\aact}[1]{\xrightarrow{#1}}
\newcommand{\bact}[1]{\xmapsto{#1:1}}
\renewcommand{\act}[1]{\xrightarrow{#1:2}}
\newcommand{\ecls}[1]{[#1]_{\tbe}}
\newcommand{\bv}{\bar v}
\newcommand{\init}{\mathit{init}}
\newcommand{\FindR}{"\text{\emph{Find-R}}"}
\newcommand{\FindNT}{"\text{\emph{Find-NT}}"}
\newcommand{\Findu}{"\text{\emph{Find-u}}"}
\newcommand{\Findx}{"\text{\emph{Find-x}}"}
\newcommand{\Learn}{"\text{\emph{Learn}}"}
\newcommand{\construct}{"\text{\emph{Construct}}"}
\newcommand{\automaton}{"\text{\emph{Automaton}}"}
\newcommand{\FindPointedL}{"\text{\emph{FindPointed}}"}
\newcommand{\FindPointedS}{"\text{\emph{FindPointedInS}}"}
\newcommand{\tbe}{\mathbin{"\equiv_L"}}
\newcommand{\ntbe}{\mathbin{"\not\equiv_L"}}
\newcommand{\itbe}{\mathbin{""\equiv_L""}}
\newcommand{\iAae}{""\mathcal{A}_{\equiv_L}""}
\newcommand{\Aae}{"\mathcal{A}_{\equiv_L}"}
\keywords{history-deterministic automata, co-Büchi automata, minimization, congruences, passive learning}
\begin{document}

\title[Minimal History-Deterministic Co-Büchi Automata]{Minimal History-Deterministic Co-Büchi Automata: Congruences and Passive Learning
} 
\thanks{Christof Löding is supported by DFG grant LO 1174/7-1.}
\thanks{This a long version of the LICS'25 article.}

\author[C. Löding]{Christof Löding\lmcsorcid{0000-0002-1529-2806}}[a]
\author[I. Walukiewicz]{Igor Walukiewicz\lmcsorcid{0000-0001-8952-7201}}[b]

\address{RWTH Aachen University, Aachen, Germany}
\address{CNRS, Bordeaux University, Talence, France}

\begin{abstract}
Abu Radi and Kupferman (2019) demonstrated the efficient minimization of history-deterministic (transition-based) co-Büchi automata, building on the results of Kuperberg and Skrzypczak (2015).
We give a congruence-based description of these minimal automata, and a self-contained
proof of its correctness.
We use this
description based on congruences to create a passive learning algorithm that can
learn minimal history-deterministic co-Büchi automata from a set of labeled example
words. The algorithm runs in polynomial time on a given set of examples, and
there is a characteristic set of examples of polynomial size for each minimal
history-deterministic co-Büchi automaton. 
\end{abstract}

\maketitle

\section{Introduction} \label{sec:intro}
Automata on infinite words, called $\omega$-automata, have been studied since
the early 1960s as a tool for solving decision problems in logic \cite{Buchi62}
(see also \cite{Thomas90}). Algorithms in formal verification of systems use
various types of automata; for example, nondeterministic $\omega$-automata are
used in model-checking procedures \cite{BaierK2008} while deterministic
$\omega$-automata find applications in the verification of probabilistic
systems~\cite{BaierK2008} or in the synthesis of reactive systems from
specifications (see \cite{Thomas09,LuttenbergerMS20} for a survey and some more
recent work). Despite this considerable interest, we still do not know how to
canonize $\w$-automata or how to learn them. 

Deterministic $\w$-automata are not a particularly promising model for canonization or learning.
For deterministic $\w$-automata, no small canonical forms are known.
The minimisation problem is NP-hard for all kinds of deterministic
$\omega$-automata expressive enough to recognise all $\omega$-regular
languages~\cite{Schewe10,Casares22,Abu.Ehl.NPhard2025}.
\AP The only known subclass for which canonical minimal automata
exist, are so-called deterministic weak Büchi automata (DWBA), for this class
minimization is possible efficiently~\cite{Staiger83,Loding01}.  


\emph{History-deterministic automata} over $\w$-words are non-deterministic
automata where the non-determinism can be resolved by a strategy depending only
on the prefix of the $\w$-word read so far. This makes them suitable for use in
game-based synthesis procedures \cite{HenzingerP06},\cite{Ehl.Kha.Fully2024};
see also \cite{BokerL23} for a recent survey. 
In this work we deal with transition-based acceptance conditions as opposed to
state-based conditions. In the following, when we speak of co-B\"uchi automata, we always refer to transition-based co-B\"uchi automata.
Co-B\"uchi
history-deterministic automata are particularly attractive because: 
\begin{itemize}
\item they can be exponentially smaller than deterministic co-B\"uchi automata for the same language~\cite{Kup.Skr.Determinisation2015},
\item there exist canonical, minimal history-deterministic co-B\"uchi automata, and minimization can
be done in deterministic polynomial time~\cite{Rad.Kup.Minimization2022}.
\end{itemize}
This is remarkable because co-Büchi languages are a standard class of
$\w$-languages that appear in many contexts. They capture the persistence
properties in the temporal hierarchy of \cite{MannaP89}, and every regular
$\w$-language can be written as a finite Boolean combination of co-Büchi
languages~\cite{Thomas90}. 

The existence of such canonical automata naturally raises the question of better
understanding their structure. For instance, we can aim to describe them using
congruences, similarly to the well-known description of minimal deterministic
finite automata using the right congruence (see standard textbooks on automata
theory like \cite{HopcroftU79}). We can then evaluate whether this new
description offers any advantages. A quite obvious choice is to apply it to the
passive learning problem which heavily relies on congruences and remains
largely unsolved for $\w$-automata. Our contributions follow this plan.

{ \medskip
\noindent
\textbf{Contribution 1} \textit{We give a congruence-based description of
minimal history-deterministic co-Büchi automata using congruence relations on
pairs of finite words (these are congruences with respect to right
concatenation of letters in the second component of the pairs).} \medskip
}

Until now, we have had only an algorithmic description of the minimal
history-deterministic co-Büchi automata. The method involves taking a
history-deterministic automaton and applying a minimization
algorithm~\cite{Rad.Kup.Minimization2022}. This algorithm consists of a series
of transformations, the most complex being safe determinization from
\cite{Kup.Skr.Determinisation2015}. This step requires solving a game on two
copies of the automaton, the winning region of the game determining which states and
transitions should be retained. Although this algorithmic approach provides
polynomial-time minimization, it does not provide a direct description of
the minimal automaton just in terms of the language, because it needs some automaton for the language to start from.
Such a direct description is the essence of our first contribution. 

Our second contribution uses the congruence-based description to develop a passive learning algorithm for history-deterministic co-Büchi automata. This is the first efficient learning algorithm for a class of $\w$-automata that does not become trivial when restricted to prefix-independent languages. Finite regular languages are determined by their right congruence.
This is not the case for $\w$-languages; in particular, prefix-independent
languages have only one right congruence class. We believe that developing an
efficient learning algorithm for co-B\"uchi languages shows that we have gained
a better understanding of the class and have made measurable progress in
learning theory. 

{\medskip
\noindent
\textbf{Contribution 2} \textit{We develop a passive learning algorithm for
history-deterministic co-Büchi automata that is polynomial in time and data
(as explained below).}\medskip\noindent
}

Passive learning is the task of constructing an automaton from a given set of
labeled examples, where the label indicates whether the example word belongs to
the language or not. Optimally, the algorithm solving the passive learning problem should
be efficient in both time and data. Time efficiency means that the algorithm
should be able to produce an automaton in polynomial time with respect to the
size of the sample. Being efficient in data refers to the concept of learning
every automaton in the limit. For every language $L$, there should be a
characteristic sample such that the learner produces an automaton for $L$ with
this characteristic sample as input, and it produces the same automaton for
every extension of the characteristic sample that is consistent with $L$. Being
polynomial in data means that for every language in the class, there should be a
characteristic sample of size polynomial in the size of the minimal automaton
for the language.

This problem has been studied for DFAs since the 1970s (see
\cite{BiermannF72,TrakhtenbrotB73,Gold78}). The first passive learning algorithm
polynomial in time and in data was proposed in~\cite{Gold78}. Since then, many
variations of the basic algorithms have been developed (see \cite{LopezG16} for
a survey) and were implemented in recent years, e.g., in the library flexfringe
\cite{Ver.Ham.Flexfringe2017}.

For $\w$-automata the progress has been much slower. There is a polynomial
active learning algorithm for learning deterministic weak Büchi automata (DWBA)
\cite{MalerP95}, which allows to construct a passive learner that is polynomial
in time and data for DWBA. The minimal automata for this class have a simple
congruence-based description analogous to DFAs \cite{Staiger83}. There are some
passive learning algorithms for $\w$-automata
\cite{AngluinFS20,BohnL21,BohnL22,BohnL24}, but the language classes that can be
learned from polynomial data by these learners are defined by semantic
restrictions of the standard right congruence of the underlying language and
become trivial when restricted to prefix-independent languages (see also related
work in Section~\ref{sec:related-work}).

The primary difficulty in developing learning algorithms for $\w$-automata lies
in the absence of a canonical form and a manageable description of such a form,
preferably based on congruences. Given our first contribution, one could expect
that it would be quite easy to get a passive learning algorithm for
history-deterministic co-Büchi automata just by following the same recipe as for
the DFA case. However, we encounter important challenges because the
congruence-based description of the minimal automaton does not use all
equivalence classes, but only those we call pointed. So, unlike in the case of
finite words, the learning algorithm cannot just enumerate all equivalence
classes and then pick the pointed ones, because the overall number of classes
might be exponential in the number of pointed classes. And not only can the
number of classes be exponential, but there can also be equivalence classes
containing only representatives of exponential size in the size of the minimal
automaton. Fortunately, for pointed classes, there are always polynomial-sized
representatives. These observations indicate why the progress in learning
algorithms for $\w$-automata has been so slow. The procedure finding
the pointed equivalence classes is the most complex part of our algorithm.

The paper is structured as follows. We finish the introduction with a discussion
of related work. In Section~\ref{sec:prelim} we introduce basic terminology and
basic results. In Section~\ref{sec:automaton} we give the congruence-based
automaton description and prove its correctness. In Section~\ref{sec:learning}
we present the learning algorithm in two steps. First, we consider an idealized
algorithm allowing us to point out difficulties outlined above and to describe
how we solve them. The final algorithm is a refinement of every step of the
idealized algorithm. In Section~\ref{sec:conclusion} we conclude and point out
several directions for further research. 

\subsection{Related Work} \label{sec:related-work}

Clearly, the starting point of our work are results and insights
from~\cite{Rad.Kup.Minimization2022}, where the existence of canonical minimal
history-deterministic co-Büchi automata is shown, which itself builds
on~\cite{Kup.Skr.Determinisation2015}. But our proofs do not rely on any of
these prior results and thus provide a new and independent view on
history-deterministic co-Büchi automata. 

The congruences we use appear in related forms, either implicitly or explicitly, in the literature. We use the right congruence $\siml$ of a language, which is a standard tool in the theory of automata on finite and infinite words. The relation $\approxl$ introduced in Section~\ref{sec:learning} is a relation on pairs of words. It is a different representation of the \emph{syntactic family of right congruences} of a language introduced in \cite{MalerS97}.
The relation $\tbe$ that is introduced in Section~\ref{sec:automaton} and which
is the basis for the automaton definition is new in its exact definition, but it
is closely related to a family of automata used in \cite{BohnL22}, where a
passive learner for deterministic Büchi automata is developed. This family of
automata, one for each $\siml$ class, induces an equivalence on pairs that
corresponds to our relation $\tbe$ with an additional condition of
$\siml$-equivalence on the first component of pairs. 
Omitting this additional condition, is crucial for
minimality.
The notion of "pointed"
pairs and the selection of the corresponding equivalence classes is, to the best
of our knowledge, new. 

Concerning previous results on passive learning for $\w$-automata, there is a
polynomial time active learner for DWBAs \cite{MalerP95}, which induces a
passive learner that is polynomial in time and data (see
\cite[Proposition~13]{BohnL21}). The paper \cite{AngluinFS20} gives an
adaptation of Gold's passive learner for DFAs to $\w$-automata. This work
applies only to automata that have exactly one
state per $\siml$-class (referred to as IRC languages for informative right
congruence). Then the transition system can be inferred as for DFAs, and it only
remains to deal with the acceptance condition. The class of DWBA languages is
contained in the IRC class \cite{Staiger83}. The well-known RPNI algorithm
\cite{GarciaO92} that infers a DFA from examples by a state merging technique
has been adapted to deterministic $\omega$-automata in \cite{BohnL21}. The
algorithm requires only polynomial data for the IRC languages considered in
\cite{AngluinFS20}, and it can also infer automata for some languages beyond
this class, but there is no further characterization of the class of inferrable
languages. The main lesson to learn from \cite{AngluinFS20,BohnL21} is that we
know how to deal with acceptance conditions in learning algorithms when the
transition system is easy to infer. There is some work beyond IRC languages,
presenting polynomial-time passive learners for deterministic Büchi automata
\cite{BohnL22} and deterministic parity automata \cite{BohnL24}. However, for
obtaining classes that can be learned from polynomial data, again a restriction
on the right congruence is required: the number of states for each $\siml$-class
needs to be at most $k$ for some fixed $k$. Restricted to prefix-independent
languages (a single $\siml$-class), this gives only finitely many languages for
fixed $k$. Therefore, none of the aforementioned learners is capable of
learning a nontrivial class of prefix-independent languages from polynomial
data. 

Finally, there is a polynomial-time active learner for the class of
deterministic parity automata \cite{MichaliszynO22}. But this algorithm uses, in
addition to membership and equivalence queries, so-called loop-index queries,
which provide some information on the structure of the target automaton and not
just on the target language.

\paragraph{Acknowledgement.} We thank the anonymous reviewers of our LMCS submission for their careful reading and their comments. 

\section{Preliminaries} \label{sec:prelim}

An alphabet $\S$ is a non-empty, finite set of letters. 
We use standard notation, $\S^*$ for the set of finite words, $\S^\w$ for the
set of infinite words.
We write $""\sqsubseteq""$ for the prefix relation on words.
In our learning algorithm we use \emph{length-lexicographic order} on finite
words, that is we first compare the lengths of words, and then compare them
lexicographically, assuming some fixed order on the alphabet.
Whenever we say that a word $v$ is \AP ""smaller"" than $w$, we refer to
length-lexicographic order.

A \emph{""co-Büchi automaton""} is a tuple $\Aa=(Q,Q_\init,\S,\Delta\incl
Q\times\S\times\set{1,2}\times Q)$ where $Q$ is a finite set of states, $Q_\init$ a
set of initial states, and $\Delta$ is a transition relation, each transition
having a rank $1$ or $2$ (we say $1$-transition or $2$-transition).
Without loss of generality we can assume that from every state there is an
outgoing transition on every letter.
A run of $\Aa$ from a state $q_0$ on an infinite word $a_0a_1\dots\in\S^\w$ is a
sequence of transitions
$q_0\aact{a_0:i_0}q_1\aact{a_1:i_1}q_2\aact{a_2:i_2}\cdots$ with $a_j \in \S$ and
$i_j \in \{1,2\}$. 
It is accepting if there are only finitely many
$1$-transitions on the run; in other words we work with the min-parity
condition.
We write $L(\Aa,q)$ for the set of words with an accepting run from $q$; and
$L(\Aa)$ for those having an accepting run from some state in $Q_\init$.
We write simply $L(q)$ when $\Aa$ is clear from the context. 
Finally $u^{-1}L(q)$ denotes for the set of words $v$ such that $uv\in L(q)$.

An automaton $\Aa$ is \AP""history-deterministic"" if Eve can resolve
nondeterminism in $\Aa$ while reading the input.
More precisely, Eve should have a winning ""strategy"" in the following game.
Eve starts by putting a token at some initial state.
Then the game proceeds in rounds: Adam chooses a letter, and Eve moves the token along a transition of her choice (with the input letter chosen by Adam).
Eve wins if the run she constructs is accepting or the infinite sequence chosen
by Adam is not in $L(\Aa)$.

We use arrow notation
to denote runs on finite words. 
For a finite word $w\in\S^*$ we write $q\act{w}q'$ when there is a run of the
automaton from $q$ to $q'$ on 
$w$ using only $2$-transitions; since the automaton is nondeterministic, there
 \AP can be more than one run. 
We write $q\bact{w}q'$ if there is a run from $q$ to $q'$ on $w$ but not $q\act{w}q'$.
Sometimes we omit the state $q'$ in the notation, so  
$q\act{w}$ means that there is $q'$ with $q\act{w}q'$, and $q\bact{w}$ means
that there is $q'$ with $q\bact{w}q'$, and there is no $q''$ with
$q\act{w}q''$. Note that $q\bact{w}$ implies that all runs from $q$ on $w$ have minimal rank $1$.
Finally, we simply write $q\aact{w}q'$ to say that there is a run without
specifying the ranks on it. 
Most often $\Aa$ will be clear form the context, so we do not specify it in our notation.

This notation can be used to define an important concept of a \AP""safe
language"" of a state: \AP $\iLs(\Aa,q)=\set{w\in\S^*: q\act{w}}$.
Note that the safe language of a state is a set of finite words, whereas $L(\Aa,q)$ is a set of infinite words.
We write $\Ls(q)$ when $\Aa$ is clear from the context.

Finally, we introduce a few standard notions for history-deterministic co-Büchi automata.
A co-Büchi automaton $\Aa$ is
\begin{itemize}
\item \AP""normalized"", if for each $2$-transition $q \act{a} q'$ there is $x
\in \S^*$ such that $q' \act{x} q$, so when restricted to $2$-transitions the
graph of the automaton consists of separated strongly connected components,
called the \AP ""safe SCCs"" of $\Aa$.
\item \AP""semantically-deterministic"", if $p\aact{a} q$ implies
$L(q)=a^{-1}L(p)$; it is well-known that a history-deterministic automaton can take only such
transitions~\cite{Kup.Skr.Determinisation2015}.
\item \AP""unsafe-saturated"", if for every $p$, letter $a$, and state $q$
such that $L(q)=a^{-1}L(p)$ we have $p\aact{a:1}q$; in the case of co-Büchi
automata, adding such $1$-transitions does not change the accepted language.
\item \AP""safe-deterministic"", if for every state $p$ and letter $a$ there
is at most one $2$-transition on $a$ from $p$, so the only non-determinism left
is in whether to choose a $1$ transition, and if so which one.
\end{itemize}

For all the above properties but "safe-determinism" it is easy to prove that
they can be ensured on a "history-deterministic" automaton $\Aa$ without
increasing the number of its states~\cite{Kup.Skr.Determinisation2015}.\footnote{More
precisely, in~\cite{Kup.Skr.Determinisation2015} automata cannot have parallel
transitions on different ranks, but one can simply transfer their result to our setting by first applying all the transformations from \cite{Kup.Skr.Determinisation2015} and then adding all $1$-transitions required for unsafe-saturation.}
\begin{lem}~\cite{Kup.Skr.Determinisation2015}\label{lem:co-Buchi-normal-sd}
  Every "history-deterministic" co-Büchi automaton can be made
  "normalized", "semantically-deterministic", and "unsafe-saturated" without
  changing the language, by respectively modifying the rank of some transitions,
  removing some states, and adding some $1$-transitions.
\end{lem}

Once an automaton is unsafe-saturated the whole complexity of a co-Büchi automaton is
hidden in 
understanding the structure of safe SCCs.
This indicates why safe-determinism is such a central property as it ensures that
inside a safe SCC the automaton is deterministic.
Making an automaton safe-deterministic is slightly more involved.
The only way we know how to do this is to use a game-theoretic approach of
Kuperberg and Skrzypczak~\cite{Kup.Skr.Determinisation2015}.
We do not use this result in our approach.

Assuming safe-determinism, the following result, essentially also
from~\cite{Kup.Skr.Determinisation2015}, shows how the notions introduced
above can be used together.

\begin{lem}\label{lem:co-Buchi-HD}
  Every "semantically-deterministic",  "unsafe-saturated", and "safe-deterministic" co-Büchi automaton is "history-deterministic".
\end{lem}
\begin{proof}
  Let $\Aa$ be such an automaton.
  We define a strategy to accept $L(\Aa)$. The argument is a generalization of the argument
  in the proof from \cite[Lemma~3]{Kup.Skr.Determinisation2015}, where a similar construction of a
  strategy is given for a specific language.
  We use some arbitrary linear ordering $p\leq q$ on the states of
  $\Aa$.

  Let us define a \emph{support}
  of a sequence $u\in\S^*$ to be a pair $(x,p')$ where $x$ is a prefix of $u$
  and $p'$ is a state such that for some   initial state $p$ there is a run
  $p\aact{x}p'\act{z}$ where $xz=u$.
  A \emph{base} $(x_u,p_u)$ of $u$ is a support of $u$ with $x_u$ the
  shortest possible, and $p_u$ the $\leq$-smallest once $x_u$ is fixed.
  Finally, the \emph{top state} $q_u$ of $u$ is the state reached by the run
  $p\aact{x_u}p_u\act{z_u}q_u$, where $x_uz_u=u$.
  State $q_u$ is determined by the base thanks to "safe-determinism". 
  
  
  The strategy for the automaton is to be in the top state of $u$ after reading
  $u$.
  Let us look why this strategy is feasible. 
  Suppose that we read a letter $a$ after $u$.
  If there is a $2$-transition $q_u\act{a}q'$, then the base does not change
  so $q'=q_{ua}$ is the top state of $ua$.
  If there is no rank $2$-transition on $a$, then the base changes to some
  $(x_{ua},p_{ua})$.
  By "semantic-determinism" for the top state $q_{ua}$ of $ua$ we get $L(q_{ua})=(ua)^{-1}L$. 
  Since $L(q_u)=u^{-1}L$ we have $L(q_{ua})=a^{-1}L(q_u)$, so there is a transition
  $q_u\aact{a:1} q_{ua}$ as desired, because $\Aa$ is "unsafe-saturated".

  We show that this strategy guarantees accepting every word from $L(\Aa)$.
  Take a word $w\in L(\Aa)$.
  Since $\Aa$ is co-Büchi, there is a prefix $u$ of $w$ such
  that the accepting run sees only $2$-transitions after reading $u$. Let $v \in \S^*$, $a \in \S$ such that $uva$ is a prefix of $w$. Then $x_{uv} \prefeq x_{uva} \prefeq u$, and either $x_{uv} \not= x_{uva}$, or  $q_{uv} \le q_{uva}$. This means that the base can change only finitely often, and 
  in consequence, for all sufficiently long prefixes of  $w$ the base is the same.
  As we have seen in the previous paragraph,  the run
  following our strategy passes through a $1$-transition only if the base changes.
  Hence, the run on $w$ following the strategy from the previous paragraph is accepting.
\end{proof}

\section{Canonical Automaton} \label{sec:automaton}
The objective of this section is to give a direct construction of a minimal
history-deterministic co-Büchi automaton for a given co-Büchi language $L$.
This minimal automaton is defined from equivalence classes of some congruence
relation determined by $L$ (cf.\ Definition~\ref{def:Aae}).  

For the whole section, fix a co-Büchi language $L \subseteq \S^\omega$.

\subsection{Definition of a canonical automaton $\Aae$}
For $u,v\in\S^*$ we write \AP $u\isiml v$ to denote the standard right congruence: 
\begin{align*}
  u\siml v&\qquad \text{if for all $w\in \S^\w$:\ \ $uw\in L$ iff $vw\in L$.}
\end{align*}
Note that $u\siml v$ iff $u^{-1}L = v^{-1}L$.

We work with pairs $(u,v)$ and for such a pair we are interested in all
ultimately periodic words of the form $u(vx)^\omega$ such that $u \siml uvx$.
Since there are only finitely many $\siml$-classes, each ultimately periodic
word can be written so that its periodic part loops on the class
reached by the prefix. The restriction to such decompositions of ultimately
periodic words is also used in the theory of families of right congruences
\cite{MalerS97} and families of DFAs \cite{Klarlund94}.


Using $L$, we define an equivalence relation on pairs of words and take some of its
equivalence classes as states of a "history-deterministic" co-Büchi automaton for $L$. 
The transitions of this automaton extend the second component of the pair:
on a $b$-transition, $b$ is appended to the second component. 
Such a transition is a $1$-transition if the resulting pair is unproductive,
and a $2$-transition otherwise.
Being unproductive intuitively means to have no extension in the language. 
We capture this by defining for every $(u,v) \in \S^* \times
\S^*$:\AP
\[
		(u,v)\iapproxbot \qquad \text{ if $v \not= \e$ and $u(vx)^\w\not\in L$ for all $x$ with $uvx \siml u$.}
\]
So we are interested in pairs $(u,v)$ such that $(u,v) \napproxbot$, and we have
a $2$-transition on $a$ from $(u,v)$ to $(u,va)$, unless $(u,va)\approxbot$.
Of course, we cannot take all $(u,v)$ as states, but rather need to define some
equivalence relation of a finite index on these pairs.
As we will see shortly, we cannot even take all the equivalence classes of the relation we
 define.

Each pair $(u,v)\in \S^*\times\S^*$ naturally defines the set of all extensions
of the second component that do not lead to $\bot$:
\[
\AP\isfl(u,v):=\set{x\in\S^*: (u,vx)\napproxbot}
\]
Intuitively, this corresponds to the concept of a safe language $\Ls$ (cf.\ Proposition~\ref{prop:Aae-accepts-L}).
The equivalence relation $\itbe$ merges $\siml$-equivalent pairs with same "safe language":\AP
\[
\text{$(u,v)\tbe(u',v')$ if $uv\siml  u'v'$ and $\sfl(u,v)=\sfl(u',v')$}
\]
%
We write $\ecls{u,v}$ for the $\tbe$ equivalence class of $(u,v)$.
\begin{exa} \label{ex:ak}
Consider the alphabet $\S_k := \{a_1, \ldots, a_k\}$ and the language of all
$\omega$-words that do not contain all letters infinitely often. Then $\siml$ is
trivial (all words are $\siml$-equivalent), and $(u,v) \approxbot$ iff $v$
contains all letters. The language $\sfl(u,v)$ consists of all $x$ such that $vx$ does not
contain all letters, and $(u,v) \tbe (u',v')$ if the same letters occur in $v$
and $v'$. The number of classes of $\tbe$ is thus $2^k$. The canonical
history-deterministic co-Büchi automaton for $L$ has $k$ states, one for each
$a_i$, that has self-loops with $2$-transitions on all $a_j$ with $j\not=i$, and 
all possible $1$-transitions.
\end{exa}
The above example shows that, in general, $\tbe$ has too many equivalence classes to construct the minimal automaton from them. 
For our automaton construction, we use only $\tbe$-classes of pairs that we call "pointed".
Intuitively, $(u,v)$ is "pointed" if no pair that is $\siml$-similar to $(u,v)$ and whose looping part ends with $v$ has a smaller 
safe language than $(u,v)$.

\begin{defi} \label{def:pointed}
	We say that $(u,v)$ is ""pointed"" if $(u,v)\napproxbot$,
	and
  \[
		\text{for all $u_1,u_2\in\S^*$, if $u_1u_2\siml u$ and $(u_1,u_2v)\napproxbot$ then $\sfl(u,v) = \sfl(u_1,u_2v)$}\ .
	\]
  Note that $u_1u_2\siml u$ and $\sfl(u,v) = \sfl(u_1,u_2v)$ imply $(u,v) \tbe (u_1,u_2v)$. We call a class of $\tbe$ pointed if it contains a pointed pair.
\end{defi}
While the notion "pointed" is a central definition of the paper, it is not easy to motivate it
at this stage. We try to give some intuition in the following example, and give some motivations later when we show technical properties of pointed elements and classes. 

\begin{exa} \label{ex:ak-2}
Continuing Example~\ref{ex:ak}, the "pointed" pairs are those of the form
$(u,v)$ where $v$ contains all but one of the letters. Indeed, if $v$ does not
contain $a_i$ and $a_j$ for $i \not= j$, then let $v_i$ be such that $v_iv$
contains all letters except $a_j$. Then $(u,v_iv) \napproxbot$ and $a_j \in
\sfl(u,v)$ while $a_j \not\in \sfl(u,v_iv)$ because $(u,v_iva_j) \approxbot$. So
out of the $2^k$ classes there are only $k$ classes that can contain "pointed"
elements. 
Indeed, if $(u,v)$ is such that $v$ contains all but one letter then it is
pointed, as for every $(u_1,u_2)$, either $u_2v$ contains the same letters as $v$ or
contains all the letters, and in the latter case $(u_1,u_2v)\approxbot$.
The rough intuition is as follows. There can be several reasons for a word to be in a language $L$: in the example, each $a_i$ not appearing infinitely often is such a reason. A "history-deterministic" automaton needs a component for each individual reason to check whether it is satisfied. The classes of $\tbe$ roughly correspond to subsets of reasons, and the classes of "pointed" pairs correspond to exactly one reason.
\end{exa}

Here is a more complicated example when $\siml$ is not trivial.
\begin{figure}
\scriptsize
\begin{tikzpicture}[
    class/.style={circle, draw, minimum size=.1cm, inner sep=.5mm},
    edge label/.style={draw=none, fill=none}
]
\node[class,fill=orange!50] (e) at (0, 0) {$\varepsilon$};
\node[class,fill=yellow!50] (a) at (1, 0) {$a$};
\node[class,fill=pink!50] (b) at (0, -1) {$b$};
\node[class,fill=green!50] (ab) at (1, -1) {$ab$};

\draw[->] (e) -- node[edge label, above] {$a$} (a);
\draw[->] (e) -- node[edge label, left] {$b,c$} (b);
\draw[->] 
    (a) edge[loop right] node[edge label, above] {$a,c$} ();
\draw[->]
    (b) edge[loop left] node[edge label, below] {$a,b,c$} ();
\draw[<->]
    (ab) edge node[edge label, left]{ $b$ }(a);
\draw[->]
    (ab) edge[loop right] node [edge label,right]{ $a,c$ } ();

\node[draw=none] at (-.5, 1) {$\siml:$};

\end{tikzpicture}
\hspace{3cm}
\begin{tikzpicture}[
    pointed/.style={rectangle, draw, thick, minimum size=.5cm,inner sep=.5mm},
    unpointed/.style={rectangle, draw, dashed, minimum size=.5cm,inner sep=.5mm},
    edge label/.style={draw=none, fill=none}
]

\node[pointed,fill=orange!50] (e-e) at (0, 0) {$(\varepsilon,\varepsilon)$};
\node[pointed,fill=pink!50] (b-e) at (0, -1) {$(b,\varepsilon)$};
\node[pointed,fill=yellow!50] (a-e) at (4, 0) {$(a,\varepsilon)$};
\node[unpointed,fill=green!50] (ab-e) at (1.5, -1) {$(ab,\varepsilon)$};
\node[pointed,fill=green!50] (a-b) at (4, -1) {$(a,b)$};
\node[pointed,fill=green!50] (ab-c) at (3, -1.6) {$(ab,c)$};

\draw[->] 
    (a-b) edge[loop right] node[edge label] {\scriptsize $a:2$} ();
\draw[<->]
    (a-b) edge node[edge label, right]{\scriptsize $b:2$ }(a-e);
\draw[->] 
    (a-e) edge[loop right] node[edge label] {\scriptsize $a:2$} ();
\draw[->]
    (ab-e) edge[loop below] node [edge label]{\scriptsize $a:2$ } ();


\draw[->]
    (ab-e) edge node[edge label, above]{\scriptsize $c:2$ }(ab-c);
\draw[->]
    (ab-e) edge node[edge label, above]{\scriptsize $b:2$ }(a-e);

\draw[->] 
    (ab-c) edge[loop right] node[edge label] {\scriptsize $a,c:2$} ();

\node[draw=none] at (-.5, .5) {$\tbe:$};

\end{tikzpicture}

\caption{Relations $\siml$ and $\tbe$ for the language of all words over $\{a,b,c\}$ that start with $a$ and have finitely many $c$ or a finite and odd number of $b$, see Examples~\ref{ex:a-start}, \ref{ex:a-start-aut}.} \label{fig:a-start}
\end{figure}
\begin{exa} \label{ex:a-start}
Figure~\ref{fig:a-start} shows the relations $\siml$ and $\tbe$ for the
language $L$ over the alphabet $\{a,b,c\}$ that contains all words that start
with $a$ and that have finite number of $c$'s or a finite and odd number of $b$'s. The
right congruence $\siml$ of $L$ has four classes:  the initial class
contains only $\e$, the class of $b$ contains all non-empty words that do not
start with $a$, the class of $a$ the words starting with $a$ and having even
number of $b$'s, finally the class of $ab$ the words starting with $a$ and with
an odd number of $b$'s.
For $\tbe$ relation, the $6$ classes that are $\napproxbot$ are shown. 
The color indicates the $\siml$-class of the respective $\tbe$-class, namely, 
the $\siml$-class of the concatenation of the two words of the pair.
By definition, pairs with different $\siml$-class are not $\tbe$-equivalent. Let us explain why the three pairs of $\siml$-class $ab$ in Figure~\ref{fig:a-start} are non-equivalent.
First, $(ab,\e)$ is not equivalent to $(a,b)$ because $c\in\sfl(ab,\e)$
since $abc^\omega$ is in the language, but $c\not\in\sfl(a,b)$ since $a(bcx)^\w$ is not in the language for every word $x$ because it has infinitely many
$b$ and $c$. Second,  $(ab,c)$ is not equivalent to $(ab,\e)$ or $(a,b)$ because $b \in \sfl(ab,\e)$, $b \in \sfl(a,b)$ but $b \notin \sfl(ab,c)$ using a similar reasoning as in the first case.
It is not hard to see that all other pairs are equivalent to one shown in the diagram or to $\bot$.
The
transitions in the diagram correspond to prolonging the second element of the
pair with the letter on the
transition.
For example, we have a transition $(a,b) \act{b} (a,bb)\tbe (a,\e)$. Missing $2$-transitions mean that the resulting pair is $\bot$.
For example, there is no $2$-transition on $c$ from $(a,\e)$ because $(a,c) \approxbot$, since there is no extension $x$ such that $a(cx)^\omega$ contains a finite and odd number of $b$.
The classes containing "pointed" pairs have
a solid frame (in this example, each class consists only of non-pointed pairs or
only of "pointed" pairs, but in general this needs not to be the case). The only
class that does not contain a "pointed" pair is the class of $(ab,\e)$, which
contains all pairs of the form $(u,a^n)$ for for $u \siml ab$ and $n \ge 0$.  Since $b \in \sfl(u,a^n)$
but $b \not\in \sfl(u,ca^n)$, we get that $(u,a^n)$ is not "pointed" because
$uc \siml ab$. We explain why $(a,b)$ is pointed, the reasoning for the other pairs being similar. The "safe language" of $(a,b)$ consists of all words that do not contain $c$, and a pair $(u,v)$ is $\tbe$-equivalent to $(a,b)$ iff $uv$ contains an odd number of $b$ and $v$ contains no $c$ and at least one $b$.
For any $u_1,u_2$ with $u_1u_2 \sim a$ and $(u_1,u_2b) \napproxbot$, we have that $u_1u_2$ contains an even number of $b$ (because $u_1u_2 \siml a$) and $u_2$ does not contain $c$ (because $(u_1,u_2b) \napproxbot$). Hence $(u_1,u_2b) \tbe (a,b)$, and thus $(a,b)$ is pointed.
\end{exa}

The next lemmas ensure that we can define the $2$-transitions on the level of
classes of "pointed" pairs.
\begin{lem}
  \label{lem:equiv-right}
	If $(u,v)\tbe (u',v')$ then for every letter $a$ we have $(u,va)\tbe (u',v'a)$.
\end{lem}
\begin{proof}
	Clearly $uva\siml u'v'a$, as $uv\siml u'v'$. 
	Take some $x\in\sfl(u,va)$. Then $ax \in \sfl(u,v)$, and hence $ax \in \sfl(u',v')$ because $(u,v)\tbe (u',v')$. So $x\in\sfl(u',v'a)$. The symmetric argument yields $\sfl(u,va) = \sfl(u',v'a)$.
\end{proof}

\begin{lem}\label{lem:pointed-right}
  If $(u,v)$ is "pointed" then for every letter $a$, either $(u,va)\approxbot$ or
  $(u,va)$ is "pointed".
\end{lem}
\begin{proof}
  Suppose $(u,va)\napproxbot$, we show that $(u,va)$ is "pointed".
  Take $u_1,u_2$ such that $u_1u_2\siml u$ and $(u_1,u_2va)\napproxbot$.
  We need to show that $\sfl(u,va)=\sfl(u_1,u_2va)$. 
  Since $(u_1,u_2v)\napproxbot$, from $(u,v)$ being "pointed" it follows
  that $\sfl(u,v)=\sfl(u_1,u_2v)$.
  Hence, $\sfl(u,va)=\sfl(u_1,u_2va)$ from the definition of $\sfl$.
\end{proof}

We are ready to define a canonical automaton
for $L$.

\begin{defi}\label{def:Aae}\AP
	The canonical automaton $\iAae$ is defined by:
	\begin{itemize}
		\item $Q^L=\set{\ecls{u,v} : \text{$(u,v)$ is "pointed"}}$,
		\item $Q^L_\init=\set{\ecls{u,v} : \text{$(u,v)$ "pointed" $uv\siml\e$}}$, 
		\item $\ecls{u,v}\aact{a:2}\ecls{u,va}$ if $(u,va)\napproxbot$,
		\item $\ecls{u,v}\aact{a:1}\ecls{u',v'}$ if $uva\siml u'v'$.
	\end{itemize}
\end{defi}
Note that the transitions are well-defined: for the first case by Lemma~\ref{lem:equiv-right}, and for the second case because $(u_1,v_1) \tbe (u_2,v_2)$ implies that $u_1v_1 \siml u_2v_2$, and $\siml$ is a right congruence. 

\begin{exa} \label{ex:ak-3}
For the language $L$ from Examples~\ref{ex:ak} and \ref{ex:ak-2}, the canonical
automaton $\Aae$ has one state for each letter $a_i$, which corresponds to the
class of "pointed" pairs $(\e,v_i)$, where $v_i$ contains all letters except
$a_i$, cf.~Figure~\ref{fig:ak-3-schematic}.
There are loops of $2$-transitions on this state for all letters except $a_i$.
Since $\siml$ has only one class, all states are connected by $1$-transitions
for all letters (so the $1$-transitions on every letter form a complete graph).

\begin{figure}[h]
\centering
\begin{tikzpicture}[
  state/.style={circle, draw, minimum size=.55cm, inner sep=.5mm},
  edge label/.style={draw=none, fill=none}
]
\node[state] (a1) at (0,0) {$a_1$};
\node[state] (a2) at (2.2,0) {$a_2$};
\node[draw=none] (dots) at (4.2,0) {$\cdots$};
\node[state] (an) at (6.2,0) {$a_n$};

\draw[->] (a1) edge[loop above] node[edge label, above] {$\Sigma\setminus\set{a_1}:2$} ();
\draw[->] (a2) edge[loop above] node[edge label, above] {$\Sigma\setminus\set{a_2}:2$} ();
\draw[->] (an) edge[loop above] node[edge label, above] {$\Sigma\setminus\set{a_n}:2$} ();
\end{tikzpicture}
\caption{Automaton $\Aae$ for the language $L$ from Example~\ref{ex:ak}. Regarding rank $2$
transitions each state $a_i$ has only a self-loop on all letters but $a_i$.
There are rank $1$ transitions between every pair of states on every letter.
These are not shown for readability.}
\label{fig:ak-3-schematic}
\end{figure}
\end{exa}

\begin{exa} \label{ex:a-start-aut}
  For the language $L$ from Example~\ref{ex:a-start}, the canonical automaton
  $\Aae$ is obtained by taking the $\tbe$-classes with solid frame shown in
  Figure~\ref{fig:a-start}, and the $2$-transitions connecting
  them. The $1$-transitions are inherited from the transitions shown for $\siml$ in
  Figure~\ref{fig:a-start}, namely,
  a transition on letter $a$ connecting two $\siml$-classes induces
  $1$-transitions connecting all states corresponding to the first class to all
  the states corresponding to the second class.
  For example, the transitions $(a,\e) \aact{b:1} (a,b)$ and $(a,\e) \aact{b:1}
  (ab,c)$ are induced by a transition from the yellow $\siml$-class to the green
  $\siml$-class.
  From the initial state $(\e,\e)$ we have $\aact{a:1}$ transition to $(a,\e)$.
\end{exa}

\begin{thm}\label{the:Aae}
  The automaton $\Aae$ is the minimal history-deterministic co-Büchi automaton for $L$.
\end{thm}
The rest of this section is devoted to the proof of this theorem.

\subsection{Correctness of $\Aae$}
First, we need to verify that indeed $\Aae$ accepts $L$, cf.\
Proposition~\ref{prop:Aae-accepts-L}. At some moment in the argument we need to
use the fact that $L$ is accepted by a finite history deterministic automaton.
We do this straight away by giving a variant of the definition of $(u,v)
\approxbot$ based on a normalized deterministic co-B\"uchi automaton for $L$.
Recall that a deterministic automaton has a unique initial state and for
every state and letter there is exactly one outgoing transition from this state
on this letter; so such an automaton has a single run on a given input from the
initial state. The definition below uses a concrete such automaton $\Dd$, but
Lemma~\ref{lem:approx-d} then shows that it is independent of the choice of
$\Dd$.

\begin{defi}\label{def:approx-d}
  Let $\Dd$ be a "normalized" deterministic co-B\"uchi automaton $\Dd$ for $L$.
  For $(u,v) \in \Sigma^* \times \Sigma^*$, write $(u,v) \iapproxd \bot$ if for
  all $u' \in \Sigma^*$ with  $u' \siml u$, the run of $\Dd$ on $u'v$ takes a
  $1$-transition on the suffix $v$.
\end{defi}
Observe that the definition implies $(u,\e)\napproxd \bot$, for every $u$,
because there is no run taking a $1$-transition on $\e$.

\begin{lem}\label{lem:approx-d}
  $(u,v) \approxbot$ iff $(u,v) \approxd \bot$.
\end{lem}
\begin{proof}
  Assume $(u,v) \napproxbot$.
  If $v=\e$ then $(u,v)\napproxd \bot$.
  Otherwise, let $x \in \Sigma^*$ with $uvx \siml u$ and $u(vx)^\omega \in L$.
  Then $\Dd$ accepts $u(vx)^\omega$ meaning there is a run not visiting
  $1$-transitions after some prefix $u(vx)^i$. Choosing $u' = u(vx)^i$ we get
  $u'\siml u$ from $uvx \siml u$, and the run of $\Dd$ on $u'v$ does not take a $1$-transition. 
  So, $(u,v) \napproxd \bot$.

  For the other direction, assume $(u,v) \napproxd \bot$.
  If $v=\e$ then $(u,v)\napproxbot$ by definition. 
  Otherwise, let $u'$ be such that $u' \siml u$, and the run of $\Dd$ on $u'v$
  does not visit a $1$-transition on the suffix $v$. 
  So in $\Dd$ there is a run $\aact{u'}q'\act{v} q$ for some $q,q'$. 
  Since $\Dd$ is "normalized", there is $x$ with $q \act{x} q'$ in $\Dd$. 
  We get $u'(vx)^\omega \in L$ and hence $u(vx)^\omega \in L$ because $u' \siml
  u$.  
  Moreover, $u'vx\siml u'$ because they reach the same state in $\Dd$.
  Hence, $uvx\siml u$ too (as $\siml$ is a congruence) giving $(u,v)\napproxbot$.
\end{proof}

\begin{lem}\label{lem:equiv-finite}
  Relation $\tbe$ has finitely many equivalence classes.
\end{lem}
\begin{proof}
By Lemma~\ref{lem:approx-d}, the language $\sfl(u,v)$ only depends on the set of
states that are reachable in $\Dd$ via $u' \siml u$ and the states reachable from
them by the run on $v$ not using a $1$-transition.
\end{proof}

The following observation is an immediate consequence of the definition of $\Aae$.
\begin{lem} \label{lem:Aea-sim-class}
For all "pointed" pairs $(u,v)$ and $(u',v')$ and all $x \in \Sigma^+$ we have $\ecls{u,v} \aact{x} \ecls{u',v'}$ iff $uvx \siml u'v'$.
\end{lem}
\begin{proof}
  The transitions always respect the $\siml$-class, so $\ecls{u,v} \aact{x} \ecls{u',v'}$ implies $uvx \siml u'v'$.
  For each state and each input letter, there are all possible $1$-transitions that respect the $\siml$-class. So there is a run from $\ecls{u,v}$ on $x$ that
  takes a $1$-transition to $\ecls{u',v'}$ on the last letter of $x$.
\end{proof}

The next lemma is an important technical lemma showing that for every pair
$(u,v)\napproxbot$ there is a "pointed" pair with $v$ as suffix starting in the $\siml$-class of $u$.
\begin{lem}\label{lem:make-pointed} 
	For all $u,v \in \S^*$ with $(u,v) \napproxbot$, there are $u_1,u_2 \in \Sigma^*$ with $u_1u_2 \siml u$ and $(u_1,u_2v)$  "pointed". 
\end{lem}
\begin{proof}
  We first show the following auxiliary claim:
  \begin{equation}
    \sfl(u_1,u_2v) \subseteq \sfl(u,v) \text{ for all $u_1,u_2 \in \Sigma^*$ with $u_1u_2 \siml u$.}\label{eq:after-decrease}
  \end{equation}
To see this, let $x \in \sfl(u_1,u_2v)$, that is, $(u_1,u_2vx) \napproxbot$. If $u_2vx = \e$, then $x=\e$ and $\e \in \sfl(u,v)$ because $(u,v) \napproxbot$. Otherwise, there is $y$ such that $u_1(u_2vxy)^\omega \in L$. Then also $u_1u_2(vxyu_2)^\omega \in L$ and hence $u(vxyu_2)^\omega \in L$ because $u_1u_2 \siml u$. This implies that $x \in \sfl(u,v)$, and thus shows (\ref{eq:after-decrease}).

Going back to the proof of the lemma.
If $(u,v)$ is already "pointed", then the claim holds by choosing $u_1=u$ and $u_2=\e$.

Otherwise,
by definition of "pointed", there are $u_{1,1},u_{1,2}$ with $u_{1,1}u_{1,2} \siml
u$, $(u_{1,1},u_{1,2}v) \napproxbot$, and $\sfl(u_{1,1},u_{1,2}v) \not=
\sfl(u,v)$. By (\ref{eq:after-decrease}), this means that
$\sfl(u_{1,1},u_{1,2}v) \subsetneq \sfl(u,v)$. Then $(u_{1,1},u_{1,2}v)
\ntbe (u,v)$ by the definition of $\tbe$.  
 
If $(u_{1,1},u_{1,2}v)$ is not "pointed", then we can repeat this argument,
obtaining $u_{2,1},u_{2,2}$ with $u_{2,1}u_{2,2} \siml u_{1,1}$,
$(u_{2,1},u_{2,2}u_{1,2}v) \napproxbot$ and
$\sfl(u_{2,1},u_{2,2}u_{1,2}v) \subsetneq \sfl(u_{1,1},u_{1,2}v)$. Since
$\tbe$ has only finitely many classes (Lemma~\ref{lem:equiv-finite}), this construction must terminate after
$i$ steps for some $i$, yielding the desired pair $u_1 := u_{i,1}$ and
$u_2:=u_{i,2}\cdots u_{1,2}$. A simple induction shows that $u_1u_{i,2} \cdots u_{j,2} \sim_L u_{1,j-1}$ for $j \in \{i, \ldots, 2\}$ using $u_{1,j}u_{2,j} \siml u_{1,j-1}$ and that $\siml$ is a right congruence. Hence $u_1u_2 = \underbrace{u_1 u_{i,2}\cdots u_{2,2}}_{\siml u_{1,1}}u_{1,2} \siml u_{1,1}u_{1,2} \siml u$.
\end{proof}

The following proposition more precisely characterizes the languages and the safe
languages accepted from the states of $\Aae$. 
It also justifies our notation $\sfl(u,v)$.

\begin{prop}\label{prop:Aae-accepts-L}
  For every state $\ecls{u,v}$ of $\Aae$:
  \begin{itemize}
    \item $L(\ecls{u,v}) = (uv)^{-1}L$,
    \item $\Ls(\ecls{u,v})=\sfl(u,v)$.
  \end{itemize}
  In particular, $\Aae$ accepts $L$.
\end{prop}
The proof of this proposition is split into two lemmas. 
We first show that $\Aae$ indeed accepts $L$ (Lemma~\ref{lem:Aae-accepts-L}), and then give the more precise characterizations of the languages and the safe
languages accepted from the states of $\Aae$ (Lemma~\ref{lem:residuals-in-Aae}).

\begin{lem}\label{lem:Aae-accepts-L}
  $\Aae$ accepts $L$.
\end{lem}
\begin{proof}
  Let $\Dd$ be a "normalized" deterministic co-B\"uchi automaton for $L$, as in Definition~\ref{def:approx-d}.
  
  We first show $L(\Aae) \subseteq L$: Let $w \in L(\Aae)$. Then there is a
  "pointed" pair $(u,v)$ and an index $i$ such that $\ecls{u,v}$ is reachable
  via $w[0,i]$, and $(u,vw[i+1,k]) \napproxbot$ for all $k > i$ (here,
  $w[\cdot,\cdot]$ denotes the infix of $w$ between given positions). By
  Lemma~\ref{lem:approx-d}, we have $(u,vw[i+1,k]) \napproxd \bot$ for all $k > i$.
  This means that $uvw[i+1,\infty)$ is accepted by $\Dd$. 
  Indeed, by definition of $(u,vw[i+1,k]) \napproxd \bot$  and the fact that $\Dd$ has only finitely many states,
  there must be some $q'$ reachable by some $u'\siml u$ such that the runs on $vw[i+1,k]$ from $q'$ do not see a $1$-transition.
  Hence, $vw[i+1,\infty]$ is accepted from $q'$, as $\Dd$ is deterministic.
  In consequence, $uvw[i+1,\infty)$ is in $L$.  
  Since $\ecls{u,v}$ is reachable via $w[0,i]$ from an initial state $\ecls{u',v'}$, we have $uv \siml w[0,i]$ by Lemma~\ref{lem:Aea-sim-class} and $u'v' \siml \e$,
  and thus $w = w[0,i]w[i+1,\infty) \in L$.

%
  
  For the inclusion $L \subseteq L(\Aae)$ consider an ultimately periodic word $uv^\w\in L$. 
  Without loss of generality we can assume $u\siml uv$; if not we just
  prolong $u$ and $v$.
  We show that $uv^\w$ is accepted by $L(\Aae)$.
  Let $n$ be bigger than the size of $\Aae$. Clearly $(u,v^n) \napproxbot$ as $uv^\omega \in L$ and $uv^n \siml u$. Thus, by Lemma~\ref{lem:make-pointed} there is "pointed" $(u_1,u_2v^n)$ with $u_1u_2\siml u$. 
  So there is $x \in \S^*$ with $u_1u_2v^nx \siml u_1$ such that $u_1(u_2v^nx)^\w\in L$.
  Since $u_1(u_2v^nx)^\w\in L$, we get that $(u_1,u_2v^ny) \napproxbot$ for all prefixes $y$ of $(xu_2v^n)^\omega$, and since $(u_1,u_2v^n)$ is "pointed",
  Lemma~\ref{lem:pointed-right} then implies that $(u_1,u_2v^ny)$ is "pointed"  for all prefixes $y$ of $(xu_2v^n)^\omega$. 
  Thus, by the definition of $\Aae$ we have
  $\ecls{u_1,u_2v^nxu_2}\act{v^n}\ecls{u_1,u_2v^nxu_2v^n}$. 
  Since $n$ is bigger than the size of $\Aae$, there are $i\ge0$ and $j>0$ such that 
  $\ecls{u_1,u_2v^nxu_2v^i}\act{v^j}\ecls{u_1,u_2v^nxu_2v^{i+j}}$ with
  $(u_1,u_2v^nxu_2v^i)\tbe (u_1,u_2v^nxu_2v^{i+j})$.
  We have found an accepting run on $uv^\w$ since $q_\init\aact{u}
  \ecls{u_1,u_2v^nxu_2v^i}$ by Lemma~\ref{lem:Aea-sim-class}, as
  $u_1u_2v^nxu_2v^i\siml u_1u_2v^i\siml uv^i\siml u$
\end{proof}

\begin{lem}\label{lem:residuals-in-Aae}\label{lem:safe-lang-of-state}
  For every state $\ecls{u,v}$ of $\Aae$:
  \begin{itemize}
    \item $L(\ecls{u,v}) = (uv)^{-1}L$,
    \item $\Ls(\ecls{u,v})=\sfl(u,v)$.
  \end{itemize}
\end{lem}
\begin{proof}
  Consider the first statement.
  Let $\ecls{u,v}$ be a state of $\Aae$. By Lemma~\ref{lem:Aea-sim-class},
  $\ecls{u,v}$ is reachable by $uv$, and hence $L(\ecls{u,v}) \subseteq
  (uv)^{-1}L$. For the other inclusion, let $w \in (uv)^{-1}L$. By
  Lemma~\ref{lem:Aae-accepts-L}, $uvw$ is accepted by $\Aae$. Let $\rho$ be
  an accepting run of $\Aae$ on $uvw$, and let $\ecls{u',v'}$ be the state
  in $\rho$ after $uva$, where $a$ is the first letter of $w$. Then $uva
  \siml u'v'$ by Lemma~\ref{lem:Aea-sim-class} and the definition of the initial
  states of $\Aae$. By definition of $\Aae$, there is an $a$-labeled $1$-transition from
  $\ecls{u,v}$ to $\ecls{u',v'}$, so we get an accepting run on $w$ starting in $\ecls{u,v}$.
  
  Now we proceed to the second statement.
  By definition of $\Aae$ we have that $\ecls{u,v}\act{x}\ecls{u,vx}$ implies $(u,vx) \napproxbot$.
  The latter is the definition of $x \in \sfl(u,v)$ and hence $\Ls(\ecls{u,v}) \subseteq \sfl(u,v)$.
  If $(u,vx) \napproxbot$, then $(u,vy) \napproxbot$ for all prefixes $y$ of $x$, and hence $\ecls{u,v}\act{x}\ecls{u,vx}$,
  which shows the other inclusion.
\end{proof}

This finishes the proof of Proposition~\ref{prop:Aae-accepts-L}.
Thanks to this proposition, and Lemma~\ref{lem:co-Buchi-HD} it is quite
easy to see that $\Aae$ is history-deterministic.

\begin{lem}\label{lem:Aae-hd}\label{lem:Aae-sd-and-us}\label{lem:Aae-is-normalized}
  $\Aae$ is "semantically-deterministic", "unsafe-saturated",
  "safe-deterministic", and "normalized".
  In particular, $\Aae$ is "history-deterministic".
\end{lem}
\begin{proof}
  "Semantic-determinism" is implied by Proposition~\ref{prop:Aae-accepts-L} and the
  fact that all the transitions of $\Aae$ respect the $\siml$-class. Since
  $\Aae$ contains all $1$-transitions that respect the $\siml$-class,
  Proposition~\ref{prop:Aae-accepts-L} implies that $\Aae$ is "unsafe-saturated". 
  As $\Aae$ is "safe-deterministic" by definition, Lemma~\ref{lem:co-Buchi-HD}
  implies that $\Aae$ is "history-deterministic".

  It remains to show that $\Aae$ is "normalized".
  Consider a state of $\Aae$ identified by a "pointed" pair $(u,v)$.
  Suppose $\ecls{u,v}\act{x}\ecls{u,vx}$ with $x\not=\e$.
  By definitions of transitions in $\Aae$, the pair $(u,vx)$ is "pointed".
  Hence, $(u,vx)\napproxbot$.
  This means that there is $y$ such that $uvxy\siml u$ and $u(vxy)^\omega\in L$.
  In particular, $(u,vxyv)\napproxbot$, because $u(vxyvxy)^\omega\in L$.
  Now we observe that $uvxyv\siml uv$ so since $(u,v)$ is "pointed" we get
  $(u,v)\tbe (u,vxyv)$.
  Thus, $\ecls{u,vx}\act{yv}\ecls{u,v}$.
\end{proof}

\subsection{Minimality of $\Aae$}
It remains to show that $\Aae$ is minimal.
This proof is slightly more involved. Similarly to~\cite{Rad.Kup.Minimization2022}
we show that there is an injection from $\Aae$ to every history-deterministic
co-Büchi automaton for $L$ (Lemma~\ref{lem:Aae-min}). 
As a new tool, we use of
the notion of "central sequence" (Definition~\ref{def:central} and
Lemma~\ref{lem:Aae-central}). This allows us to directly give a proof for all
history-deterministic co-Büchi automata for $L$, while
\cite{Rad.Kup.Minimization2022} relies on \cite{Kup.Skr.Determinisation2015} in
order to restrict to "safe-deterministic" ones. We start with a small lemma showing that $\Aae$ has the property called safe-minimal in \cite{Rad.Kup.Minimization2022}.\looseness=-1

\begin{lem}\label{lem:states-in-Aae}
  If $L(\ecls{u,v})=L(\ecls{u',v'})$ and $\Ls(\ecls{u,v})=\Ls(\ecls{u',v'})$
  then $\ecls{u,v}=\ecls{u',v'}$.
\end{lem}
\begin{proof}
  This is a direct consequence of Lemma~\ref{lem:residuals-in-Aae}, because first,
  $(uv)^{-1}L=L(\ecls{u,v})=L(\ecls{u',v'})=(u'v')^{-1}L$, which shows $uv\siml u'v'$, and second
  $\sfl(u,v)=\Ls(\ecls{u,v})=\Ls(\ecls{u',v'})=\sfl(u',v')$, which shows $(u,v) \tbe (u',v')$.
\end{proof}

\begin{defi} \label{def:central} 
  Let $\Bb$ be a "safe-deterministic" co-Büchi automaton. 
	A ""central sequence""	for a state $q$ of $\Bb$ is $z_q\in\S^*$ such that
	$q\act{z_q}q$ and for every $p$ with $L(p) = L(q)$ we have $p\act{z_q}q$ or
	$p\bact{z_q}$.
\end{defi}

\begin{rem}
  Recall that $p\bact{z_q}$ means that every run from $p$ on $z_q$ visits a $1$
  transition, and there is at least one such run. 
  A "central sequence" of $q$ can be $\e$. In this case the definition degenerates to
  saying that there is no state $p\not= q$ with $L(p)=L(q)$.\looseness=-1
\end{rem}

\begin{lem}\label{lem:Aae-central}
  Every state $q$  of $\Aae$ has a "central sequence" $z_q$.
\end{lem}
\begin{proof}
  A state of $\Aae$ is an equivalence class $\ecls{u,v}$ of a "pointed" pair
  $(u,v)$, in particular $(u,v) \napproxbot$. 

  When $v=\e$ then $(u,v)$ is the unique state recognizing
  $L(\ecls{u,v})$.
  Indeed, if $L(\ecls{u,v}) = L(\ecls{u',v'})$ then we have $u'v' \siml uv$. 
  This gives also $u'v'\siml u$, as $v=\e$.
  Since $(u,v)$ is "pointed" wet get $\sfl(u',v')=\sfl(u',v'v) = \sfl(u,v)$.
  So $(u,v)\tbe (u',v')$. Hence $\e$ is a central sequence for $\ecls{u,v}$.
  
  Now consider the case $v \not= \e$.
  Since $(u,v) \napproxbot$, there exists $x$ such that $uvx \siml u$ and
  $u(vx)^\omega \in L$. We claim that $xv$ is "central" for $q = \ecls{u,v}$. 
  
  Because $u(vx)^\omega \in L$, we have that $(u,vxv) \napproxbot$. Thus, since $(u,v)$ is "pointed", and $uvx \siml u$, we obtain that $(u,vxv) \tbe (u,v)$, which means that $\ecls{u,v} \act{xv} \ecls{u,v}$. So the first condition in the definition of "central sequence" is satisfied.

  It remains to show the second condition of the definition of "central sequence".
  Let $\ecls{u',v'}$ be some state of $\Aae$ that is language equivalent to 
  $\ecls{u,v}$. By Lemma~\ref{lem:residuals-in-Aae}, this implies that $u'v'
  \siml uv$ and thus $u'v'x \siml u$. Hence, again because $(u,v)$ is "pointed",
  we have that either $(u',v'xv) \approxbot$ or $(u',v'xv) \tbe (u,v)$. This
  shows that $xv$ is "central" for $q = \ecls{u,v}$.
\end{proof}

The following lemma when put together with Lemma~\ref{lem:states-in-Aae} shows
that $\Aae$ is a minimal history-deterministic co-B\"uchi automaton for $L$. 
\begin{lem}\label{lem:Aae-min}
  Let $\Bb$ be a "history-deterministic" co-B\"uchi automaton for $L$. For each state $p$ of $\Aae$ there is a state $q_p$ of $\Bb$ with $\Ls(\Aae,p) = \Ls(\Bb,q_p)$ and $L(\Aae,p) = L(\Bb,q_p)$.
\end{lem}
\begin{proof}
  We can assume that $\Bb$ is "normalized" and "semantically-deterministic" by Lemma~\ref{lem:co-Buchi-normal-sd}.

  Let $p$ be a state of $\Aae$. By Lemma~\ref{lem:Aae-central}, there is a
  "central sequence" $z_p$ for $p$. 
  Call a state $q$ of $\Bb$ a \emph{$z_p$-loop} if $L(\Bb,q) = L(\Aae,p)$
  and $q \act{z_p^k} q$ for some $k \ge 1$.

  We first show that $\Ls(\Bb,q) \subseteq \Ls(\Aae,p)$ for every $q$ that is a
  $z_p$-loop. 
  Suppose for a contradiction that $x \in \Ls(\Bb,q) \setminus \Ls(\Aae,p)$. Then $x \not= \e$ because $\e$ is in $\Ls$ of every state. Let $y \in \Sigma^*$ with $q \act{xy} q$. Such $y$ exists because $\Bb$ is "normalized".
  Then $\Bb$ accepts $(z_p^kxy)^\omega$ from $q$.  But since $z_p$ is "central"
  for $p$ and $x \not\in \Ls(\Aae,p)$, we have that when starting from a
  state $p'$ with $L(\Aae,p) = L(\Aae,p')$, $z_p^kxy$ must see a $1$-transition.
  By semantic determinism of $\Aae$ and the fact that $q \act{z_p^kxy} q$, we get that every run $\Aae: p \aact{(z_p^kxy)^m} p'$ ends in a state with $L(\Aae,p) = L(\Aae,p')$.
  So $(z_p^kxy)^\omega$ is
  not accepted from $p$, contradicting $L(\Bb,q) = L(\Aae,p)$.

  Now assume that for every $z_p$-loop $q$ of $\Bb$ there is $x_q \in
  \Ls(\Aae,p) \setminus \Ls(\Bb,q)$.
  Let $y_q$ be such that $p \act{x_qy_q} p$.
  Let $f$ be a "strategy" witnessing "history-determinism" of $\Bb$, and
  let $u \in \Sigma^*$ such that $p$ is reachable via $u$ in $\Aae$.
  Note that all the states of $\Aae$ are reachable because all $\siml$-classes are reachable from $\e$, and $\Aae$ has all $1$-transitions respecting the $\siml$-class.
  Consider the run of $\Bb$ constructed by $f$ of the following form:
  \[
 (\Bb,f):  q_0 \aact{uz_p^{n_0}} q_1 \bact{x_1y_1z_p^{n_1}} q_2 \bact{x_2y_2z_p^{n_2}} q_3 \cdots
  \]
  where each $n_i$ is chosen such that $q_{i+1}$ is a $z_p$-loop, $x_i := x_{q_i}$, and $y_i := y_{q_i}$.
  Note that a $z_p$-loop is reached with sufficiently many iterations of $z_p$ because after each $z_p^n$ in the above run, the state of $\Bb$ must have the same language as $p$ (otherwise $f$ would not be a strategy for accepting the correct language).  
By choice of the $x_i$, the constructed run visits $1$-transitions on all $x_i$ and thus is not accepting. But in $\Aae$ there is an accepting run that moves to $p$ on $u$ and then loops on $p$ without visiting $1$-transitions. This contradicts that $f$ constructs an accepting run for each word in $L$, so there must be a $z_p$ loop $q$ that has the same safe language as $p$.
\end{proof}

\section{Passive Learning} \label{sec:learning}
We develop a passive learning algorithm of history-deterministic co-Büchi
automata. 
It relies on the congruence-based construction of a minimal history-deterministic
automaton from the last section; Definition~\ref{def:Aae}.

\AP A ""sample"" $S$ is a set of pairs $(u,v) \in \Sigma^* \times \Sigma^+$
partitioned into sets $S^+$ and $S^-$; we often write $S = (S^+,S^-)$. A pair
$(u,v)$ represents the ultimately periodic word $uv^\omega$. 
An ultimately
periodic word can be represented by different pairs (e.g., $(a,b)$ and $(ab,bbb)$
represent the same ultimately periodic word).
When we say that $uv^\omega$ is in the sample, we mean that some pair representing $uv^\omega$ is in the sample.
\AP\ A "sample" $S$ is ""consistent"" with $L \subseteq \S^\omega$ if 
$S^+ \subseteq L$ and $L \cap S^- = \emptyset$.
The size of a "sample" is the sum of the lengths $|uv|$ over all the $(u,v)$ in
the sample. We say that $S_2$ is an
""extension"" of $S_1$ if $S^+_1\incl S^+_2$ and $S^-_1\incl S^-_2$, and we
write $S_1 \incl S_2$ in this case; here we mean the inclusion between 
$S^+_1$ and $S^+_2$ as sets of pairs of finite words, and not between the sets of ultimately periodic words they represent.

\AP A ""passive learner"" for history deterministic co-Büchi automata (just learner, for short) is a function $f$ that maps samples to history deterministic co-Büchi automata. Such a learner
$f$ is called a polynomial-time learner if $f$ can be computed in polynomial time (as usual, measured in the size of the input, i.e., the sample), and 
$f$ is a \AP  ""consistent learner"" if for each "sample" $S=(S^+,S^-)$, the
constructed automaton is "consistent" with $S$, meaning the language of the
automaton $f(S)$ is consistent with the "sample" $S$.
Further, we say that $f$ can learn every co-Büchi language from
""polynomial data"" if for every co-Büchi language $L$ there is a \AP
""characteristic sample"" $S_L$ of size polynomial in $\Aae$, where "characteristic" means that for $\Aa := f(S_L)$ we have $L(\Aa) =
L$ and for every "extension" $S$ of $S_L$ that is consistent with $L$ we have $f(S) = \Aa$.

\subsection{Overview and Challenges} \label{sub:overview}
Assuming that the "sample" contains the relevant information for the language $L$,
our "learner" constructs the automaton $\Aae$ by inferring the classes of $\tbe$
for "pointed" pairs. 

The first step is to find representatives \AP $\iRsim$\label{def:Rsim} for the classes
of $\siml$, which is used in the definition of $\tbe$. This is straightforward
and can be done as for finite words. 
We start with $\e\in \Rsim$. 
In a loop, having already $\Rsim=\set{u_1,\dots,u_k}$, we find the
"smallest" word $u$ such that for all $i$ there is a proof in $S$ of $u\nsiml u_i$.
A proof consists of two witnesses $(uw,x)\in S^+$ and $(u_iw,x)\in S^-$ for
some $w,x$ (or with $S^+$, $S^-$ interchanged).
If the "sample" contains all the minimal representatives of
$\siml$ classes, and contains proofs that they are pairwise not $\siml$-equivalent
then this algorithm indeed finds the minimal representatives of
the $\siml$-classes. 
The algorithm terminates, as the number of candidates for $u$ is bounded
by the size of $S$.\looseness=-1

For finding representatives of the "pointed" $\tbe$-classes, there are two main
challenges. The first one is that non-equivalence for $\tbe$ can, in general,
not be witnessed by finitely many examples: Spelling out the full definition of
$\ntbe$ shows that it contains a universal quantifier over finite words, so it
is not sufficient to give finitely many examples for proving $(u,v) \ntbe
(u',v')$. The second challenge is that we need to extract the "pointed" classes.
Since the definition of "pointed" contains a universal quantifier and uses the
relation $\tbe$, it is not directly possible to prove that a pair is "pointed"
just by adding finitely many examples to the "sample". Note that constructing all
classes of $\tbe$ is not an option since  the total number of $\tbe$-classes can
be exponential in the size of the minimal automaton (see Example~\ref{ex:ak-2}).
Furthermore, there are example languages showing that $\tbe$ can have classes
whose shortest representatives have length exponential in the number of
"pointed" classes (see Example~\ref{ex:long-SCC-path}).

Our plan for solving these challenges is as follows:
\begin{itemize}
  \item In (Section~\ref{sub:approx}) we introduce relation $\approxl$, which is a refinement of $\tbe$, and we show that it can be used instead of $\tbe$ on "pointed" elements.
  
  \item We immediately put $\approxl$ to work in Section~\ref{sub:components} by showing how to construct a
  "safe SCC" of $\Aae$ for a given "pointed" $(u,v)$. This turns out to be easy, essentially it is the
  same as for languages of finite words because non-equivalence $\napproxl$ can be witnessed as easily as $\nsiml$.
  
  \item In Section~\ref{sub:idealized} we present a complete but idealized learning algorithm assuming
  that we can query some properties of $L$. This idealized version is easier to understand, and its
  correctness proof serves as an intermediate step for the correctness proof of
  the passive learning algorithm.
  
  \item The most complicated step in the idealized algorithm is to find a fresh
  "pointed" pair $(u,v)$. For this we need new theoretical developments in form of
  a few technical lemmas. This step is presented in a separate Section~\ref{sub:find-pointed}. There we also give an example illustrating the main steps of the algorithm (Example~\ref{ex:long-SCC-path-learn}).
  
  \item Finally, in Section~\ref{sub:passive-learner}, we spell out the passive learning algorithm that is obtained by essentially
  replacing each line of the idealized algorithm by a passive learning
  procedure. We prove correctness of each of these procedures.
\end{itemize}

\subsection{Equivalence $\approxl$} \label{sub:approx}
The first step in our plan is to introduce an equivalence relation $\iapproxl$
that we can use instead of $\tbe$ in some contexts.
It is central to our learning algorithm.
\begin{defi}\label{def:approx}
  For $(u,v),(u',v')\in\S^*\times\S^+$ we define $(u,v)\approxl(u',v')$ if
   $u\siml u'$, $uv\siml u'v'$, and for every $x\in\S^*$ with $uvx\siml u$ we have
   $u(vx)^\w\in L$ iff $u'(v'x)^\w\in L$. 
\end{defi}
Observe that we require that the second components of the pairs in
the relation are not empty as otherwise a statement like $u(vx)^\w\in L$ would
not make sense when $vx=\e$. 
The $\approxl$ relation is easy to work with in the context of learning because
 a proof for non-equivalence $(u,v)\napproxl(u',v')$ requires only two pairs $(u,vx),(u'v'x)$ with $u(vx)^\w \in L$ iff $u'(v'x)^\w \notin L$.
This is in contrast to the $\tbe$ relation that is defined by a nested
quantification. 
The following crucial lemma shows that once we have a "pointed" pair, we can work with
$\approxl$ instead of $\tbe$.

\begin{lem}\label{lem:pointed-approx-equiv}
  For $(u,v)$ "pointed" with $v\not=\e$, and for arbitrary $x,y\in\S^*$ we have: $(u,vx)\tbe (u,vy)$ iff
	$(u,vx)\approxl (u,vy)$.
\end{lem}
\begin{proof}
  Consider the right-to-left direction. 
  We have $uvx\siml uvy$. Take $z\in\sfl(u,vx)$. 
  This means there is $z'$ such that $uvxzz'\siml u$ and $u(vxzz')^\w\in L$.
  By $(u,vx)\approxl (u,vy)$ we get $u(vyzz')^\w\in L$, so, as $uvyzz'\siml uvxzz'\siml u$, we have $z\in\sfl(u,vy)$, and we are done.

  For the left-to-right direction, we once again have $uvx\siml uvy$.
  Take $z$ such that $u(vxz)^\w\in L$ and $uvxz\siml u$.
  Since $(u,v)$ is "pointed" and by Lemma~\ref{lem:Aea-sim-class}, we have a run in $\Aae$:
  $\ecls{u',v'}\aact{uv}\ecls{u,v}\act{x}\ecls{u,vx}\act{z}\ecls{u,vxz}\act{v}\ecls{u,vxzv}$ for some initial state $\ecls{u',v'}$.
  Now from the definition of "pointed" applied to $(u,v)$ and $uvxz\siml u$ we get
  $\sfl(u,vxzv)=\sfl(u,v)$.
  This means $\ecls{u,vxzv}=\ecls{u,v}$, so we have found a loop in $\Aae$ on
  rank $2$ transitions.
  But since $(u,vx)\tbe (u,vy)$ we get
  $\ecls{u,v}\act{y}\ecls{u,vy}\act{z}\ecls{u,vxz}\act{v}\ecls{u,vxzv}$, so 
  $u(vyz)^\w\in L$ as desired.
\end{proof}

\subsection{Components $C(u,v)$ and automaton $\Aa[\Cc]$} \label{sub:components}

Our next goal is to construct for a given "pointed" $(u,v)$ an automaton $C(u,v)$
isomorphic to the "safe SCC" of $\Aae$ containing $\ecls{u,v}$.
We call such an automaton a ""component"".
Then for a set of components $\Cc$ we can define an automaton $\AaCc$ as $\Aae$
restricted to these components.

To define $C(u,v)$ we profit from Lemma~\ref{lem:pointed-approx-equiv} allowing us to work
with $\approxl$ instead of $\tbe$.
The set of states of $C(u,v)$ consists of representatives of $\approxl$-equivalence classes 
extending $(u,v)$, namely the set $\iRappl(u,v)=\set{(u,vw_1),\dots,(u,vw_k)}$\label{def:Rappl}  of pairs such that
\begin{itemize}
  \item $w_1=\e$,
  \item $w_i$ is the "smallest" such that $(u,vw_i)\napproxl\set{\bot,(u,vw_1),\dots,(u,vw_{i-1})}$, and
  \item for every pair $(u,vw)$ extending $(u,v)$ we have  $(u,vw)\approxl(u,vw_i)$ for some $i$.
\end{itemize}
The set $\Rappl(u,v)$ can be constructed using the same principle as for the construction of $\Rsim$ described at the beginning of Section~\ref{sub:overview}.

The transitions of $C(u,v)$ are determined by:
\begin{itemize}
	\item $(u,vw_i)\act{a}(u,vw_j)$ if $(u,vw_ia)\approxl(u,vw_j)$.
\end{itemize}
So a component has only rank $2$ transitions. 

\begin{lem}\label{lem:component-is-safe}
  If $(u,v)$ is "pointed" then $C(u,v)$ is isomorphic to the
	"safe SCC" of $\Aae$ containing $\ecls{u,v}$.
\end{lem}
\begin{proof}
	Since $(u,v)$ is "pointed", all $(u,vw_i)$ are "pointed" by
	Lemma~\ref{lem:pointed-right}. 
	Due to the choice of $w_1,\dots,w_k$ and by
	Lemma~\ref{lem:pointed-approx-equiv}, there is a bijection between the sates 
	of the safe SCC of $\Aae$ and states of the component. 
	The transitions are defined in the same way for the two.
\end{proof}

\begin{defi}
  Given some set of "components" $\Cc=\set{C(u_1,v_1),\dots,C(u_k,v_k)}$ we
  define 
  automaton $\iAaCc$ whose states are the states of these components, whose rank
  $2$ transitions are the transitions in these components, and whose rank $1$
  transitions are: $(u,v)\aact{a:1}(u',v')$ if $uva\siml u'v'$.
  The initial states are $(u,v)$ such that $uv\siml \e$.
\end{defi}

\begin{restatable}{lem}{ComponentAutomatonIsOk}\label{lem:component-automaton-is-ok}
	For every set of components $\Cc$, we have $L(\AaCc)\incl L(\Aae)$. 
	There is a set of components such that $L(\AaCc)=L(\Aae)$.
\end{restatable}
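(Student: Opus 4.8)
The plan is to establish the two claims separately, leaning on the structural description of $\Aae$ developed in Section~\ref{sec:automaton}.

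For the inclusion $L(\AaCc)\incl L(\Aae)$, the key observation is that every state of $\AaCc$ is a pointed pair $(u,v)$, hence (by Proposition~\ref{prop:Aae-accepts-L}) it is a genuine state of $\Aae$ with $L(\Aae,\ecls{u,v})=(uv)^{-1}L$ and $\Ls(\Aae,\ecls{u,v})=\sfl(u,v)$. I would argue that the identity-on-pairs map is a morphism from $\AaCc$ into $\Aae$ that commutes with the transition structure: the rank-$1$ transitions of $\AaCc$ are exactly those $(u,v)\aact{a:1}(u',v')$ with $uva\siml u'v'$, which matches the definition of rank-$1$ transitions in $\Aae$ up to the $\tbe$-class (and by Lemma~\ref{lem:component-is-safe} the rank-$2$ transitions of a component coincide, via this map, with the rank-$2$ transitions inside the corresponding safe SCC of $\Aae$). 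Since $\Aae$ is semantically-deterministic (Lemma~\ref{lem:Aae-hd}), any run of $\AaCc$ on a word $w$ maps to a run of $\Aae$ on $w$ with at least as few $1$-transitions (actually the same number, since the map preserves ranks), so acceptance in $\AaCc$ implies acceptance in $\Aae$; hence $L(\AaCc)\incl L(\Aae)=L$.

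For the second claim, I would take $\Cc$ to be the collection $\set{C(u,v) : (u,v)\text{ pointed}}$ of all components, one representative per safe SCC of $\Aae$. By Lemma~\ref{lem:component-is-safe} each such $C(u,v)$ is isomorphic to the corresponding safe SCC of $\Aae$, and since every state $\ecls{u,v}$ of $\Aae$ lies in some safe SCC (the automaton $\Aae$ is normalized by Lemma~\ref{lem:Aae-is-normalized}, so its $2$-transitions decompose into safe SCCs), the union of these components covers all states of $\Aae$ and all its rank-$2$ transitions. The rank-$1$ transitions added in the definition of $\AaCc$ are precisely those of $\Aae$: both are "$uv\siml u'v'$". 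Consequently $\AaCc$ is isomorphic to $\Aae$ (the initial states also match: $uv\siml\e$ in both), so $L(\AaCc)=L(\Aae)$. There is a subtlety to address: the set of pointed pairs is infinite, but only finitely many $\tbe$-classes are pointed, so one should phrase $\Cc$ as containing one component per pointed $\tbe$-class (the choice of representative within a class does not matter by Lemma~\ref{lem:equiv-right} and Lemma~\ref{lem:pointed-right}); with that phrasing $\Cc$ is finite as required by the definition of $\AaCc$.

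I expect the main obstacle to be the bookkeeping around the isomorphism: one must check carefully that gluing the components together along the externally-added rank-$1$ transitions reproduces $\Aae$ exactly, in particular that distinct components correspond to distinct safe SCCs (no state is counted twice) and that no rank-$2$ transition of $\Aae$ crosses between two safe SCCs — this is exactly where normality of $\Aae$ is used. The inclusion direction is comparatively routine once the identity map is recognized as a rank-preserving morphism.
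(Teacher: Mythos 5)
Your argument is correct and follows the paper's route: $\AaCc$ is a sub-automaton of $\Aae$ (rank-$1$ transitions by definition, rank-$2$ transitions by Lemma~\ref{lem:component-is-safe}), which gives the inclusion, and taking one component for each safe SCC of $\Aae$ yields the equality. Two small points: "semantic-determinism" is not needed for the inclusion once you observe that $\AaCc$ is a subgraph of $\Aae$; and for the second part you should pick one component per safe SCC of $\Aae$ rather than one per pointed $\tbe$-class (a safe SCC can contain several such classes, so your choice would produce redundant isomorphic copies — harmless for the language equality but it breaks the literal isomorphism claim).
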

\begin{proof}
	Every component corresponds to a "safe SCC" of $\Aae$.
	The rank $1$ transitions in $\AaCc$ are placed exactly as in $\Aae$. 
	So the graph of $\AaCc$ is a subgraph of $\Aae$, hence $L(\AaCc)\incl
	L(\Aae)$.
	If $\Cc$ contains a component for every "safe SCC" of $\Aae$ then we get the equality.
\end{proof}

\subsection{Idealized learning algorithm} \label{sub:idealized}
We are ready to present an idealized version of the learning algorithm in
Listing~\ref{lst:algo} that assumes unrestricted access to the language $L$.
In Section~\ref{sub:passive-learner}, we discuss how to implement each of these steps in a true passive learning
algorithm that does not have access to $L$ but only to a "sample" of $L$.

\begin{lstlisting}[caption={An idealized learning algorithm constructing $\Aae$},label=lst:algo,float,frame=lines,escapechar=|]
function $\mathit{IdealizedLearn}(L)$:
  find the set $\Rsim = \{u_1,\ldots,u_k\}$ 
  find $\NTsim\incl \Rsim$ |\label{alg:lineE}|
  $\Cc:=\set{ C(u,\e) : u\in \Rsim-\NTsim}$ |\label{alg:line-u-epsilon}|
  $\Aa:=\AaCc$
  while $L\not=L(\Aa)$ do |\label{alg:while}|
    find the $\text{\kl{smallest}}$ $u_i$ such that there is $x$ with $u_ix^\w\in L-L(\Aa)$, and $u_ix\siml u_i$. |\label{alg:u}|
    find the $\text{\kl{smallest}}$ $x$ such that $u_ix^\w\in L -L(\Aa)$ and $u_ix\siml u_i$|\label{alg:x}|
    $d:=\max(\set{|C| :  C\in \Cc} \cup \{1\})$
    find a $\text{"pointed"}$ $(u_i,x^dv)$ extending $(u_i,x^d)$ |\label{alg:pointed}|
    construct the $\text{"component"}$ $C(u_i,x^dv)$ using $\Rappl(u_i,x^dv)$ |\label{alg:C}|
    add $C(u_i,x^dv)$ to $\Cc$.
    $\Aa:=\AaCc$
  return($\Aa$)
\end{lstlisting}
Let us look closer how this idealized version of the algorithm works.
The first step is  to find representatives for the classes of 
$\siml$. 
For this the algorithm computes the set $\Rsim=\set{u_1,\dots,u_k}$ as explained at the beginning of Section~\ref{sub:overview}.

Then the algorithm proceeds with constructing all the "components" of $\Aae$.
There are two types of components.
We have trivial components corresponding to $\ecls{u,\e}$ with $ux^\w\not\in L$ for
all $x\in\S^+$. Observe that such $(u,\e)$ is "pointed", and there are no rank $2$ transitions
going out of $\ecls{u,\e}$.
For the language from Examples~\ref{ex:a-start}, \ref{ex:a-start-aut} and Figure~\ref{fig:a-start}, these are the components of the classes $\ecls{(\e,\e)}$ and $\ecls{(b,\e)}$.
The algorithm computes the set $\iNTsim\incl \Rsim$ of those $u_i$
for which there is $x$ with $u_i\siml u_ix$ and $u_ix^\w\in L$. 
These classes of $\siml$ give rise to non-trivial components in $\Aae$.
Again in the example from Figure~\ref{fig:a-start}, the set $\NTsim$ is $\{a,ab\}$.
If $u_i\not\in \NTsim$ then $C(u_i,\e)$ is a trivial component consisting of one
state with no transitions: $(u_i,\e)$ is "pointed" and $\sfl(u_i,\e)=\set{\e}$.
All such components are added to $\Cc$, and at this point $\AaCc$ is the
automaton accepting the empty language as it has no rank $2$ transitions. 

In the while-loop of the algorithm we keep the invariant that $L(\AaCc)\incl L$. 
If the inclusion is strict, then we find the "smallest" example for this, namely we
find the "smallest" $\siml$-class $u_i$ and the "smallest" $x$ for this class such that
$u_ix^\w\in L-L(\Aa)$ and $u_ix\siml u_i$. 
Then we find a "pointed" pair $(u_i,x^dv)$ extending $(u_i,x^d)$, for sufficiently big
$d$.
The choice of $d$ is such that it guarantees that $(u_i,x^dv)$ is new, namely
$(u_i,x^dv)\ntbe (u',v')$ for all $(u',v')\in\bigcup\Cc$ (cf. Lemma~\ref{lem:xd-gives-new}).
Hence, the component $C(u_i,x^dv)$ that we add to $\Cc$ strictly increases the size of
$\Cc$. 
This guarantees termination in polynomial time as the number of equivalence
classes of $\tbe$ with a "pointed" pair is exactly the size of $\Aae$.
In the example from Figure~\ref{fig:a-start}, after adding the trivial components, we would get $u_i = a$ and $x=a$ and $d=1$. The pair $(a,a)$ is already pointed, so the algorithm proceeds with constructing the component containing the classes $\ecls{(a,a)} = \ecls{(a,\e)}$ and $\ecls{(a,b)}$.

All the steps but that of finding a new "pointed" pair in line~\ref{alg:pointed} are relatively
easy to implement in a passive learning algorithm.
We discuss  line~\ref{alg:pointed} in Section~\ref{sub:find-pointed}. 
At this point we can already prove correctness and the polynomial time
complexity of the algorithm (Theorem~\ref{thm:idealized}) assuming every step is
implemented correctly.
The proof is based on the next lemma ensuring that the algorithm adds a new
component in every iteration of the loop.
More precisely, it says that once we have found $(u,x^d)$, we can be sure that
all its extensions $(u,x^dv)$ are new, as long as they are "pointed".

\begin{lem}\label{lem:xd-gives-new}
  With the notations from Listing~\ref{lst:algo}, suppose $ux^\w\in L-L(\AaCc)$ and $ux\siml u$. 
  Suppose moreover that $d=\max(\set{|C|: C\in\Cc} \cup\set{1})$.
  If $(u,x^dv)$ is a "pointed" pair extending $(u,x^d)$ then 
  $(u,x^dv)\ntbe (u',v')$ for every $(u',v')\in\bigcup\Cc$.
\end{lem}
\begin{proof}
  First we observe that $x^d\not\in\sfl(u',v')$ for every
  $(u',v')\in\bigcup\Cc$ with $u\siml u'v'$.
  Indeed, otherwise $\AaCc$ would have an accepting run on $ux^\w$ by the choice of $d$. 

  Since $(u,x^dv)$ is "pointed", $(u,x^dv)\napproxbot$. 
  So there is $y$ with $u(x^dvy)^\w\in L$ and $u\siml ux^dvy$. 
  In particular, $yx^d\in \sfl(u,x^dv)$.
  Suppose to the contrary that $(u,x^dv)\tbe (u',v')$ for  some $(u',v')\in C$
  and $C\in\Cc$.
  Hence, $yx^d\in \sfl(u',v')$.
  But then, since $C$ is a component, we have some $(u'',v'')\in C$ with $(u'',v'')\tbe (u',v'y)$ in $C$.
  This means that $x^d\in\sfl(u'',v'')$. 
  Moreover,  $u''v''\siml u$ because $u''v''\siml u'v'y$, $u'v'\siml ux^dv$ and
  $ux^dvy\siml u$.
  A contradiction with our observation at the beginning of the proof.
\end{proof}

\begin{thm}\label{thm:idealized}
  The idealized algorithm from Listing~\ref{lst:algo} does at most
  $|\Aae|$ iterations of the loop and returns $\Aae$.
\end{thm}
\begin{proof}
  The main point is to show the invariant stating that $\Cc$ is a subset of the "components" of $\Aae$.
  Then by Lemma~\ref{lem:component-automaton-is-ok} we have $L(\AaCc)\incl
  L(\Aae)$.
  This invariant implies correctness.
  As Lemma~\ref{lem:xd-gives-new} shows, in every iteration of the loop we add a new component.  
  Hence, the number of iterations of the loop is bounded by the size of $\Aae$.

  Let us look why the invariant holds.
  The invariant is preserved in the loop thanks to
  Lemma~\ref{lem:component-is-safe} as each time we add $C(u_i,x^dv)$
  for a "pointed" $(u_i,x^dv)$. 
  It remains to check that the invariant holds in the beginning when  $\Cc$ contains $C(u,\e)$ for all $u\in \Rsim-\NTsim$.
  We need to verify that $(u,\e)$ is "pointed". 
  For this we show that for every $u_1u_2\siml u$ with $u_2\not=\e$ we have
  $(u_1,u_2)\approxbot$. 
  Indeed, if $(u_1,u_2) \napproxbot$, then there would be $x$ with $u_1\siml
  u_1u_2x$ and $u_1(u_2x)^\w\in L$. 
  But then $u_1u_2(xu_2)^\w \in L$, hence $u(xu_2)^\w\in L$ meaning $u\in \NTsim$.
\end{proof}

\subsection{Finding a new pointed element} \label{sub:find-pointed}
The difficult part of the idealized learning algorithm from
Listing~\ref{lst:algo} is hidden in line~\ref{alg:pointed}. 
The task is to extend (the second component of) a given pair $(u,v)\napproxbot$,
satisfying $u\siml uv$,  to a  "pointed" pair. 
An idealized algorithm is presented in Listing~\ref{lst:find-pointed}.
But to explain the idea behind it, we need a definition and some lemmas. 

\begin{lstlisting}[caption={Finding a "pointed" $(u,w\bv)$ extending $(u,v)$},label=lst:find-pointed,float,frame=lines,escapechar=|]
  function $\iFindPointedL(u,v)$: //required $u\siml uv$
    Set $\bv=v$.
    while $\exists x$ s.t. $u\siml u\bv x$, $(u,\bv x \bv)\napproxbot$ and $u(\bv x)^\w\not\in L$ |\label{fp:while1}|
      Find the $\text{\kl{smallest}}$ such $x$. |\label{fp:x1}|
      $\bv:=\bv(x\bv)^m$, where $m$ the biggest s.t., $(u,\bv(x\bv)^m)\napproxbot$. |\label{fp:m}|
    Set $w=\e$.
    while $\exists x$ s.t. $u\siml uw\bv x$, $(u,w\bv x\bv)\napproxl\set{\bot,(u,w\bv)}$ |\label{fp:while2}|
      Find the $\text{\kl{smallest}}$ such $x$. |\label{fp:x2}|
      $w:=w\bv x$.
    return $(u,w\bv)$  
\end{lstlisting}

We heavily use the structure of $\Aae$ in this subsection. 
We use $p,q$ for states of $\Aae$. 
Whenever we talk about states or transitions, this refers to the automaton
$\Aae$.
We write $\aact{u}p$ to mean that in $\Aae$ there is a run on $u$ from an initial
state to $p$.

\begin{defi}
  For a pair $(u,v)$ we define:
  \[
    \itgt(u,v)=\set{q : \exists p.\ \aact{u}p\act{v}q}
  \]
\end{defi}
The general idea is to use $\tgt$ as a measure for how close a pair is to being "pointed", since the "pointed" elements are those for which $\tgt$ is a singleton, as shown in Lemma~\ref{lem:theta-pointed}. Given a pair $(u,v)$ that is not yet "pointed", the algorithm starts by finding an $x$ such that $\theta(u,vxv) \subsetneq \tgt(u,v)$.  The following example illustrates that such an $x$ has to be chosen with care because the length of $vxv$ at least doubles w.r.t.\ the length of $v$.

\begin{figure}
\begin{tikzpicture}[
    state/.style={circle, draw, inner sep=1pt},
    edge label/.style={draw=none, fill=none}
]
\node[state](q00) at (0,0) {$q_0^0$};
\node[state](q01) at (0,1) {$q_0^1$};
\node at (1,0) {$\cdots$};
\node at (1,1) {$\cdots$};
\node[state](qi0) at (2,0) {$q_i^0$};
\node[state](qi1) at (2,1) {$q_i^1$};
\node at (4,0) {$\cdots$};
\node at (4,1) {$\cdots$};
\node[state](qk0) at (5,0) {$q_{k}^0$};
\node[state](qk1) at (5,1) {$q_{k}^1$};

\draw[->](q00) edge[bend left] node[edge label, left]{\scriptsize $a_0$ }(q01);
\draw[->](q01) edge[bend left] node[edge label, right]{\scriptsize $\S_{>0}$ }(q00);

\draw[->](qi0) edge[bend left] node[edge label, left]{\scriptsize $a_i$ }(qi1);
\draw[->](qi1) edge[bend left] node[edge label, right]{\scriptsize $\S_{>i}$ }(qi0);
\draw[->](qi0) edge[loop right] node[edge label,right] {\scriptsize $\S_{<i}$} ();
\draw[->](qi1) edge[loop right] node[edge label,right] {\scriptsize $\S_{<i}$} ();

\draw[->](qk0) edge[bend left] node[edge label, left]{\scriptsize $a_k$ }(qk1);
\draw[->](qk1) edge[bend left] node[edge label, right]{\scriptsize $a_{k+1}$ }(qk0);
\draw[->](qk0) edge[loop right] node[edge label,right] {\scriptsize $\S_{<k}$} ();
\draw[->](qk1) edge[loop right] node[edge label,right] {\scriptsize $\S_{<k}$} ();
\end{tikzpicture}
\caption{The $2$-transitions of the minimal "history-deterministic" co-Büchi automaton for the language from Example~\ref{ex:long-SCC-path} with state set $Q_k$.}\label{fig:long-SCC-path}
\end{figure}

\begin{exa}\label{ex:long-SCC-path} 
  Consider $k \ge 1$, and $\S := \{a_0, \ldots,a_{k+1}\}$. 
For $i \in \{0,\ldots,k\}$ let $\S_{>i} := \{a_j \mid i < j \le
  k+1\}$ and $\S_{<i} := \{a_j \mid 0 \le j < i\}$. 
  The language $L$ contains those $\omega$-words $w$ such that for some $i \in \{0,\ldots,k\}$ only finitely often
  patterns of the form $a_i\S_{<i}^*a_i$ and $\S_{>i}\S_{<i}^*\S_{>i}$ appear in $w$.
  That is, when removing all letters from $\S_{<i}$, there are no two successive
  occurrences of $a_i$ and no two successive occurrences of $\S_{>i}$. For
  example,  $a_j^\omega \in L$ for all $j \in \{0,\ldots,k-1\}$ because there
  are no occurrences of the bad patterns for all $i \in \{j+1, \ldots,k\}$. But
  $a_k^\omega \not\in L$ because for $i \in \{0,\ldots,k-1\}$ the pattern
  $\S_{>i}\S_{>i}$ occurs infinitely often, and for $i = k$, the pattern
  $a_ia_i$ occurs infinitely often. Similarly, $a_{k+1}^\omega \not\in L$.

  Figure~\ref{fig:long-SCC-path} shows the $2$-transitions of the minimal
  "history-deterministic" co-Büchi automaton for $L$. The language has only one
  $\siml$-class and thus the automaton contains all possible $1$-transitions.
  Note that all the components have the same structure: the left-most component
  does not have self-loops because $\S_{<0}$ is empty; and the edge label
  $a_{k+1}$ on the transition from $q_k^1$ to $q_k^0$ corresponds to the only
  letter from $\S_{>k}$.
  There are words making the automaton
  behave as a counter counting up to $2^k$, as explained in the following.
  Letter $a_i$ sets the $i$-th component to $q_i^1$, all components $i'<i$  to
  $q_{i'}^0$, and leaves all other components (with larger index) unchanged. So
  letter $a_i$ is responsible for setting bit $i$ to $1$, all bits for $j<i$ to
  $0$, while leaving all other bits unchanged. The condition defining the language
  says that there is a bit $i$ such that from some point onwards the actions on
  this bit alternate (setting to $1$ and to $0$).

  
  Assume that we want to construct a "pointed" pair that extends $(\e,v_0)$ with
  $v_0 = \e$. With $x_0 = a_0$ we get $\tgt(\e,v_0x_0v_0) \subsetneq
  \tgt(\e,v_0)$, so we let $v_1 := v_0x_0v_0 = a_0$. Then
  $\tgt(\e,v_1)$ contains all states except $q_0^0$. If we take $x_1 = a_1$, then $\tgt(\e,v_1x_1v_1)$ contains all states except $q_0^0,q_1^0$, so $\tgt(\e,v_1x_1v_1) \subsetneq \tgt(\e,v_1)$ and we let
  $v_2 := v_1x_1v_1 = a_0a_1a_0$.
  If we continue like this, always letting $v_{j+1} = v_jx_jv_j$ with $x_j = a_j$, then $\tgt(\e,v_{j+1})$ contains all states
  except $q_0^0, \ldots, q_j^0$.

  A "pointed" pair constructed in this way would thus be of exponential length because $|v_j| = 2^j -1$. We note that we also get an exponential growth of the words if each $x_j$ is chosen ``greedily'' as the "smallest" word such that $\tgt(\e,v_jx_jv_j) \subsetneq \tgt(\e,v_j)$. We have chosen the $x_j$ as above because then the constructed words have some further properties, as explained below.
  
  \begin{description}
  \item[C1] $(\e,v_k)$ visits $2^k$ different SCCs of the $\tbe$-graph in the sense that for all prefixes $v \prefeq v' \prefeq v_k$ with $v \not= v'$, we have $(\e,v) \ntbe (\e,v'w)$ for all $w$.
  \item[C2] $(\e,a_kv_k)$ is the shortest in its $\tbe$-class.
  \end{description}
  These two facts show that there are not only exponentially many $\tbe$-classes but also
  classes in which the shortest representative has exponential length in the
  size of $\Aae$.

  In the remainder of the example we prove claims C1 and C2.

  Recall that $\siml$ is trivial, so we can always assume that the first component of pairs is $\e$. For this reason, we can consider $\tbe$ to be a relation on single words instead of pairs by omitting the first component. We first show that for our example language $L$, we have for all $v,v' \in \S^*$:
  \[
  v \tbe v' \text{ iff } \tgt(v) = \tgt(v'). \tag{$*$}
  \]
  If $\tgt(v) = \tgt(v')$, then $\sfl(v) = \bigcup_{q \in \tgt(v)}\Ls(q) = \bigcup_{q \in \tgt(v')}\Ls(q) = \sfl(v')$, and hence $v \tbe v'$. (So this direction holds in general, it does not use any properties of $L$).

  For the other direction, assume $\tgt(v) \not= \tgt(v')$, and wlog.\ let $q_i^h \in \tgt(v) \setminus \tgt(v')$ for $i \in \{0,\ldots, k\}$ and $h \in \{0,1\}$. The only state from which there is a rank 2 run on $a_ia_{k+1}a_ia_{k+1}$ is $q_i^0$, and the only state from which there is a rank 2 run on $a_{k+1}a_ia_{k+1}$ is $q_i^1$. Hence, if $h=0$, then $a_ia_{k+1}a_ia_{k+1} \in \sfl(v) \setminus \sfl(v')$, and if $h=1$, then $a_{k+1}a_ia_{k+1} \in \sfl(v) \setminus \sfl(v')$, and therefore $v \ntbe v'$. This finishes the proof of $(*)$.
  \medskip

  We write $(q_0^0, \ldots, q_i^0) \act{x} (q_0^1, \ldots, q_i^1)$ for expressing that  $q_j^0 \act{x} q_j^1$ for each $j \in \{0,\ldots,i\}$, so $x$ maps the $0$-state to the $1$-state for each component $\le i$.
  \medskip

  \noindent\emph{Proof of claim C2:}
  By induction on $i$, one easily shows that $v_{i+1}$ is the shortest $x$ with $(q_0^0, \ldots, q_i^0) \act{x} (q_0^1, \ldots, q_i^1)$. For $i=0$ we have $v_{i+1} = a_0$, so the claim holds. For $i>0$, such a word $x$ needs to contain $a_i$. Since $a_i$ has no rank 2 transition on $q_j^0$ for $j < i$, and since $q_j^1 \act{a_i} q_j^0$ for all $j < i$, we obtain that $x$ must be of the form $x_1a_ix_2$ with $(q_0^0, \ldots, q_{i-1}^0) \act{x_{1,2}} (q_0^1, \ldots, q_{i-1}^1)$. By induction, the shortest such word is $v_ia_iv_i = v_{i+1}$.

  Now note that $\tgt(a_k) = \{q_0^0, \ldots, q_{k-1}^0,q_k^1\}$, and
  $\tgt(a_kv_k) = \{q_0^1, \ldots, q_{k-1}^1,q_k^1\}$ because $(q_0^0, \ldots,
  q_{k-1}^0) \act{v_k} (q_0^1, \ldots, q_{k-1}^1)$ as explained above. As
  every $v$ with $\tgt(v) = \{q_0^1, \ldots, q_{k-1}^1,q_k^1\}$ needs to contain
  $a_k$ followed by an $x$ with $(q_0^0, \ldots, q_{k-1}^0) \act{x} (q_0^1,
  \ldots, q_{k-1}^1)$, and since $v_k$ is the shortest such $x$, we obtain that
  $a_kv_k$ is the shortest word $v$ with $\tgt(v) = \{q_0^1, \ldots, q_{k-1}^1,q_k^1\}$ and thus by $(*)$ also the shortest representative of its class. 
  \smallskip

  \noindent\emph{Proof of claim C1:}
  We show this claim by showing that if a word $x \in \Sigma_{<i}^*$ for $i \in \{0, \ldots,k+1\}$ is such that $\theta(x)$ contains at least one state from each component, then $|x| \le 2^i-1$. Before proving this, let us first see how C1 follows from this. We have $v_k \in  \Sigma_{<k}^*$, and from the proof of the second claim we know that $(q_0^0, \ldots, q_{k-1}^0) \act{v_k} (q_0^1, \ldots, q_{k-1}^1)$, and thus $\tgt(v_k)$ contains a state from each component. Since $\tgt(v_k)$ contains both states from component $k$ (because $v_k \in  \Sigma_{<k}^*$), we obtain that for each prefix $v$ of $v_k$, the class $\ecls{v}$ only contains words over $\Sigma_{<k}$ (using $v \tbe v' \text{ iff } \tgt(v) = \tgt(v')$ by $(*)$), hence only words of length at most $2^k-1$. If there were $v \prefeq v' \prefeq v_k$ with $v \not= v'$ and $v \tbe v'w$ for some $w$, then we obtain a contradiction by showing that $\ecls{v}$ contains words of arbitrary length as follows. We have $v' = vx$ for some $x \not= \e$. From $v \tbe v'w = vxw$, by repeated application of Lemma~\ref{lem:equiv-right}, we get $v \tbe v(xw)^n$ for all $n$. Since $x \not= \e$, this witnesses arbitrary long words in $\ecls{v}$. So C1 follows.

  For completing the proof, let $x \in \Sigma_{<i}^*$ for $i \in \{0, \ldots,k+1\}$ such that $\theta(x)$ contains at least one state from each component. We use induction on $i$. If $i=0$, then $\Sigma_{<i} = \emptyset$, so $x=\e$ and the claim holds. If $i > 1$, then $x$ contains the letter $a_{i-1}$ at most once because otherwise $\tgt(x)$ would not contain a state from component $i-1$. If $x$ does not contain $a_{i-1}$, then by induction $|x| \le 2^{i-1}-1$. Otherwise, $x = x_1a_{i-1}x_2$ with $x_1,x_2 \in \Sigma_{<i-1}^*$, and by induction $|x_1a_{i-1}x_2| \le 2(2^{i-1}-1)+1 = 2^i-1$.
  
  This finishes the proof of the claims and Example~\ref{ex:long-SCC-path}.
\end{exa}

In order to avoid ending up with words of exponential length, we show 
a method of choosing $x$ such that either makes a significant decrease from $\tgt(u,v)$ to
$\tgt(u,vxv)$, or it makes progress without doubling the
second element of the pair.
These two types of progress correspond to, respectively, the first and the
second while loop in Listing~\ref{lst:find-pointed}. 
We need the following definitions that are illustrated in
Example~\ref{ex:long-SCC-path-learn}. 

\begin{defi}\label{def:supported}
  We say:
  \begin{itemize}
    \item \AP$(u,v)$ is \emph{""supported""} if $\tgt(u,v)\not=\es$.
    \item \AP$(u,v)$ is \emph{""double-supported""}  if there are at least two elements from
    $\tgt(u,v)$ in the same "safe SCC" of $\Aae$.
    \item \AP$(u,v)$ is \emph{""single-supported""} if it is supported but not "double-supported".
  \end{itemize}
\end{defi}
Basically, the algorithm $\FindPointedL$ in Listing~\ref{lst:find-pointed} extends the second component of the given pair $(u,v)$ by iteratively constructing words $\bv$ and $w$, starting with
$\bv=v$ and $w=\e$, such that the size of $\tgt(u,w\bv)$ decreases.
The notions of double-supported and single-supported correspond to the two while
loops in Listing~\ref{lst:find-pointed}.
If $(u,\bv)\napproxbot$ is not "pointed" then $\tgt(u,\bv)$ is "supported" but not a
singleton; cf.\ Lemma~\ref{lem:theta-pointed} below.
If $(u,\bv)$ is "double-supported", then by Lemma~\ref{lem:double-supported} there
is $x$ satisfying the condition of the first while loop. 
When the condition of the first while loop is no longer true, then $(u,\bv)$ is "single-supported".
Either $(u,w\bv)$ is "pointed", or by Lemma~\ref{lem:single-supported} there is
$x$ satisfying the condition of the second while loop, and $\tgt(u,w\bv)$ decreases. 
In each iteration of the first while loop, the length of the second component
more than doubles, but Lemma~\ref{lem:x-acceleration} shows that at the same time
$\tgt(u,\bv)$  decreases quickly enough.
In the iterations of the second while loop, the length of the second component increases only by a
polynomial factor. 
Before going into details we explain this on our example.

\begin{exa}
  \label{ex:long-SCC-path-learn}
  We continue Example~\ref{ex:long-SCC-path} from Figure~\ref{fig:long-SCC-path}
  to  illustrate the concepts of "single-supported" and "double-supported" and
  the use of the different lemmas in the construction of "pointed" pairs. Note
  that for this example language $\siml$ is trivial, so $\Rsim = \{\e\}$ and we
  can ignore all conditions on $\siml$ in the algorithm. 

  Consider Listing~\ref{lst:algo}, and, for the purpose of illustration, assume that the component $C_0$ consisting of $\{q_0^0,q_0^1\}$ has already been constructed. Then $x = a_0$ in line~\ref{alg:x}
  because $(a_0)^\omega \in L$ but it is not accepted by the component $C_0$.
  We have $d=2$ in line~\ref{alg:pointed}, and $\FindPointedL(\e,a_0a_0)$ is
  called. We now enter the notation in this call of $\FindPointedL$. For better
  readability we index $\bv$ from different iterations of the loop.

  We start with $\bv_0=a_0a_0$, for which $\tgt(\e,\bv_0) = Q_k \setminus
  \{q_0^0,q_0^1\}$, that are all states except the ones from $C_0$. So
  $(\e,\bv_0)$ is "double-supported", and Lemma~\ref{lem:double-supported}
  guarantees the condition of the while loop in line~\ref{fp:while1} is
  satisfied. Then $x=a_k$ in line~\ref{fp:x1} because $(\bv_0a_k)^\omega \notin
  L$, $\bv_0a_k\bv_0 \napproxbot$. We have $\bv_0(a_k\bv_0)^2 \approxbot$
  because the repetition of $a_0$ removes all states from $C_0$, and the
  repetition of $a_k$ with only $a_0$ in between removes the states of all other
  components. Hence, $m=1$ in line~\ref{fp:m} and we continue with $\bv_1 :=
  \bv_0a_k\bv_0$. Note that $|\bv_1| = 2|\bv_0| + |x|$. Lemma~\ref{lem:x-size}
  guarantees that the length of $x$ is polynomial in the size of $\Aae$, and
  Lemma~\ref{lem:x-acceleration} guarantees that $|\tgt(\e,\bv_1)| \le \frac{1}{2} |\tgt(\e,\bv_0)|$.
And indeed we have $\tgt(\e,\bv_1) = \{q_1^0, \ldots, q_{k-1}^0,q_k^1\}$. So these two lemmas ensure that the length of the $\bv$ constructed in the first loop remains polynomial. For this, it is essential that $u(\bv x)^\omega$ is not in $L$ in line~\ref{fp:while1}, which ensures a quick decrease of $\tgt$, and avoids the problems illustrated in Example~\ref{ex:long-SCC-path}.

At this point the condition of the while loop in line~\ref{fp:while1} is not satisfied anymore
for $(\e,\bv_1)$. To see this, assume that $x$ is such that $\bv_1x\bv_1\napproxbot$, which by Lemma~\ref{lem:theta-empty} is equivalent to $\tgt(\bv_1x\bv_1) \not= \emptyset$. Since it is not possible to read $\bv_1$ from one of the states $\{q_1^0, \ldots, q_{k-1}^0,q_k^1\}$ with only rank $2$ transitions, this means that $\tgt(\bv_1x)$ contains a state from $\{q_1^1, \ldots, q_{k-1}^1,q_k^0\}$. Assume that $q_j^1 \in \tgt(\bv_1x)$ for some $j \in \{1, \ldots, k\}$. Then we obtain $q_j^0 \act{x} q_j^1 \act{\bv_1} q_j^0$, which means that $\bv_1(x\bv_1)^\omega \in L$ because there is an accepting run on it. Similarly, if $q_k^0 \in \tgt(\bv_1)$, we get $\bv_1(x\bv_1)^\omega \in L$ because $q_k^1 \act{x} q_k^0 \act{\bv_1} q_k^1$.
Since $(\e,\bv_1)$ is "single-supported" (but not yet
"pointed" because $\tgt(\e,\bv_1)$ is not a singleton),
Lemma~\ref{lem:single-supported} guarantees that the condition of the while loop
in line~\ref{fp:while2} is satisfied (with $w = \e$) with an $x$ whose length is
polynomial in $\Aae$. Then $x=a_1$ in line~\ref{fp:x2} because
$\tgt(\e,\bv_1a_1\bv_1) = \{q_1^0\}$, so $(\e,\bv_1a_1\bv_1) \napproxbot$ and
$(\e,\bv_1a_1\bv_1) \napproxl (\e,\bv_1)$ because, for example,
$(\e,\bv_1a_1\bv_1a_2) \approxbot$ but $(\e,\bv_1a_2) \napproxbot$.

Note that Lemma~\ref{lem:single-supported} only gives $\tgt(\e,\bv_1a_1\bv_1) \subsetneq \tgt(\e,\bv_1)$, so it could happen that the size of $\tgt$ decreases by only $1$. But every application of the second while loop only increases the length of $w$ by $x\bv$ for the $x$ of the current execution of the while loop, and the fixed $\bv$ resulting from the first loop, which is polynomial in $\Aae$ as argued above (formal arguments are given in the proof of Proposition~\ref{prop:idealized-pointed-finding-correct}). 

  Now $(\e,\bv_1a_1\bv_1) = (\e,a_0a_0a_ka_0a_0a_1a_0a_0a_ka_0a_0)$ is "pointed"
  and returned from the call $\FindPointedL(\e,a_0a_0)$. The main algorithm then
  proceeds to construct the component of the returned "pointed" pair, which
  is $C_1$ in this case. This construction is based on
  Lemma~\ref{lem:component-is-safe}. Moreover, Lemma~\ref{lem:xd-gives-new} indeed
  guarantees that the returned "pointed" pair always belongs to a new component.  
\end{exa}

We proceed with the formal statements and arguments.
The running time of the algorithm and the length of the computed words are expressed in terms of \AP $\iasize:=|\Aae|$ that
is the size of $\Aae$.
We start with two auxiliary lemmas, relating important notions on pairs of words to their $\tgt$-sets.

\begin{lem}\label{lem:theta-empty}
  $\tgt(u,v) = \emptyset$ iff $(u,v) \approxbot$.
\end{lem}
\begin{proof}
Assume that $\tgt(u,v) \not= \emptyset$.  If $v=\e$, then $(u,v) \napproxbot$ by
definition. Otherwise, let $q \in \tgt(u,v)$ and $p$ such that there is a run
$\aact{u} p \act{v} q$, and let $x$ be such that $q \act{x} p$, which exists
since $\Aae$ is normalized (Lemma~\ref{lem:Aae-is-normalized}). Then $uvx \siml u$ and $u(vx)^\omega \in L(\Aae)=L$,
and hence $(u,v) \napproxbot$. 

Now assume that $(u,v) \napproxbot$. If $v=\e$, then $\tgt(u,v) \not= \emptyset$
because $\Aae$ is "unsafe-saturated" (Lemma~\ref{lem:Aae-is-normalized}) and hence there is a run on every word.
Otherwise, let $x$ be such that $uvx \siml u$ and $u(vx)^\omega \in L$. Then
there is a run of the form $\aact{u(vx)^i}p\act{(vx)^j}p$ for some $i \ge0$ and
$j \ge 1$. 
Since $u(vx)^i \siml u$, we get $\aact{u}p$ by "unsafe-saturation" and
$\aact{u}p\act{v}p'$ for some $p'$, and thus $\tgt(u,v) \not= \emptyset$. 
\end{proof}

\begin{lem}\label{lem:theta-pointed}
  $\tgt(u,v)$ is a singleton iff $(u,v)$ is "pointed". 
\end{lem}
\begin{proof}
  For the left-to-right direction, if $\tgt(u,v)$ is a singleton, then $(u,v) \napproxbot$ by Lemma~\ref{lem:theta-empty}. For showing that $(u,v)$ is "pointed", take some $u_1u_2\siml u$ with $(u_1,u_2v)\napproxbot$.
  We need to show that $\sfl(u_1,u_2v)= \sfl(u,v)$. Since $\e$ is in $\sfl$ of every pair that is $\napproxbot$, we only need to consider non-empty words below.

  First we observe that $\sfl(u_1,u_2v)\incl \sfl(u,v)$ without any
  assumption on $\tgt(u,v)$.
  Indeed, if $x\in \sfl(u_1,u_2v)$ then there is $y$ such that
  $u_1(u_2vxy)^\w\in L$ and $u_1u_2vxy\siml u_1$.
  But then $u_1u_2(vxyu_2)^\w\in L$ and $u_1u_2vxyu_2\siml u_1u_2$.
  This shows $x\in\sfl(u,v)$ as $u_1u_2\siml u$. 

  For the other inclusion take $x\in\sfl(u,v)$.
  Then there is $y$ such that $u(vxy)^\w\in L$ and $uvxy\siml u$.
  This implies that we have a run $\aact{u(vxy)^i}p\act{v}q\act{xy}$ for some $p,q$ and $i \ge 0$, and thus also $\aact{u}p\act{v}q\act{xy}$ by "semantic-determinism" and "unsafe-saturation".
  Since $(u_1,u_2v)\napproxbot$, again using "semantic-determinism" and "unsafe-saturation", we have
  $\aact{u_1}p_1\act{u_2}p_2\act{v}q_1$ for some $p_1,p_2$ and $q_1$.
  But since $\tgt(u,v)$ is a singleton, and $u_1u_2\siml u$, we conclude that
  $q_1=q$.
  Thus, $x\in\sfl(u_1,u_2v)$ because
  $\aact{u_1}p_1\act{u_2}p_2\act{v}q\act{x}$, and $\Aae$ is "normalized" (Lemma~\ref{lem:Aae-is-normalized})

  For the right-to-left direction we suppose $\tgt(u,v)$ is not a singleton, and
  show that $(u,v)$ is not "pointed".
  As $\tgt(u,v)$ is not a singleton, in $\Aae$ we have runs:
  \[
    \aact{u}p_1\act{v}q_1\quad\text{and}\quad \aact{u}p_2\act{v}q_2
  \]
  with $q_1\not=q_2$.
  Since $L(q_1)=L(q_2)$ by "semantic-determinism", we must have $\Ls(q_1)\not=\Ls(q_2)$ by Lemma~\ref{lem:states-in-Aae}.
  Suppose $z\in \Ls(q_1)-\Ls(q_2)$.
  Since $\Aae$ is "normalized", there is $z'$ such that $q_1\act{zz'}q_1$.
  Take $x=zz'$.
  We have $(u,vx)\napproxbot$ by Lemma~\ref{lem:theta-empty} because $p_1\act{vx}q_1 \in \tgt(u,vx)$. Let $z_{p_2}$ be a "central sequence" for $p_2$, which exists by Lemma~\ref{lem:Aae-central}. Then $uz_{p_2}\siml u$, and $(u,z_{p_2}v)\napproxbot$ because $p_2\act{z_{p_2}v}q_2$.
  On the other hand, $(u,z_{p_2}vx)\approxbot$ because for every state $p'$ with $L(p)=u^{-1}L$ we have
  that either $p'\bact{z_{p_2}}$ or $p'\act{z_{p_2}}p_2\act{v}q_2\bact{z}$.
  So $\sfl(u,v)\not=\sfl(u,z_{p_2}v)$, hence $(u,v)$ is not "pointed".
\end{proof}

The next two Lemmas~\ref{lem:double-supported} and \ref{lem:x-acceleration} deal
with the condition of the first while loop in line~\ref{fp:while1} of
$\FindPointedL$, as discussed in  Example~\ref{ex:long-SCC-path-learn}.

\begin{lem}\label{lem:double-supported}
  If $(u,v)$ is "double-supported" then there is $x\in\S^+$ such that $u\siml uvx$,
  $(u,vxv) \napproxbot$, and $u(vx)^\w\not\in L$.
\end{lem}
\begin{proof}
  If $\tgt(u,v)$ is "double-supported" then there are $p,p'\in
  \tgt(u,v)$ and runs $\aact{u}p_u\act{v}p$, $\aact{u}p'_u\act{v}p'$, where $p_u,p'_u,p,p'$ are in
  the same safe SCC of $\Aae$.
  By Lemma~\ref{lem:states-in-Aae} we can assume that there is $y\in \S^+$ with $p\act{y}q$, for some
  $q$, and $p'\bact{y}$.
  Then we can take some $w$ with $q\act{w}p'_u$ as $q$ and $p'_u$ are in the same
  safe SCC of $\Aae$.
  Consider:
  \[
    \aact{u}p_u\act{v}p\act{y}q\act{z_q}q\act{w}p'_u\act{v}p'\bact{y}
  \]
  where $z_q$ is a "central sequence" for $q$, which exists by Lemma~\ref{lem:Aae-central}.
  Take $x=yz_qw$.
  We have $u\siml uvx$ because there is a run on $uvx$ ending in $p'_u$ and
  $\Aae$ is "semantically-deterministic" (Lemma~\ref{lem:Aae-is-normalized}).
  We also have $u(vx)^\w\not\in L$ because for every $p_u''$ with $\aact{u}p_u''$ we
  have either $p_u''\bact{vx}$ or $p_u''\act{vx}p'_u\act{v}p'\bact{x}{}$.
\end{proof}

\begin{lem}\label{lem:x-acceleration}
  Suppose $(u,v)\napproxbot$, $u\siml uv\siml uvx$, and
  $u(vx)^\w\not\in L$.
  Let $m$ maximal such that $(u,v(xv)^m)\napproxbot$.
  Then $\tgt(u,v(xv)^m)\subseteq \tgt(u,v)$ and $|\tgt(u,v(xv)^m)|\leq \frac{1}{m+1}|\tgt(u,v)|$.
\end{lem}
\begin{proof}
  First note that $m$ is well-defined because we have $(u,v(xv)^0) = (u,v)
  \napproxl \bot$, and further if $(u,v(xv)^n)\napproxl \bot$ for all $n$, then
  $\tgt(u,v(xv)^n) \not= \emptyset$ for all $n$ by Lemma~\ref{lem:theta-empty}, and hence $\Aae$ would accept $u(vx)^\w$ as $\Aae$ is "safe-deterministic".
  Consider:
  \[
    \aact{u}P_u\act{v}Q_0\act{xv}Q_1\act{xv}\cdots\act{xv}Q_m\act{xv}Q_{m+1}=\es\ .
  \]
  Here $P_u = \tgt(u,\e)$ is the set of states $\ecls{u',v'}$ with $(u',v')\siml u$, and $Q_i = \tgt(u,v(xv)^i)$ for $i \in \{0,\ldots,m+1\}$, where $Q_{m+1} = \emptyset$ follows from the choice of $m$ and Lemma~\ref{lem:theta-empty}.
  Thanks to $u\siml uv\siml uvx$, we have  $Q_i\incl P_{u}$, for all
  $i$.
  Further, we get $Q_{i+1}\incl Q_i$ as follows. For $i=0$ we have $P_u \act{v} Q_0\act{x}Q'_0\incl P_u$ because $u\siml uvx$, and $Q_0'\act{v}Q_1$. This implies $Q_1 \subseteq Q_0$. Then we can proceed by induction because $Q_{i-1} \act{xv} Q_i \act{xv}Q_{i+1}$ and $Q_{i-1} \subseteq Q_i$ imply $Q_{i+1} \subseteq Q_i$.
  This shows that $\tgt(u,v(xv)^m) = Q_m \subseteq Q_0 = \tgt(u,v)$.

  For the claim on the size of $\tgt(u,v(xv)^m)$, let $R_i=Q_i-Q_{i+1}$, for $i=0,\dots,m$, and $R_{m+1} = \emptyset$.
  Since $Q_0 \supseteq Q_1 \supseteq \cdots \supseteq Q_m \supseteq Q_{m+1} = \emptyset$, we have $Q_0=R_0\cup R_{i+1}\cup\dots\cup R_m$ and the union is disjoint.
  We show that $R_i \act{xv} R_{i+1}'$ with  $R_{i+1}' \supseteq R_{i+1}$ for $i =0,\dots,m$. For $i = m$ note that $Q_m=R_m$ and $R_m\act{xv}\es = R_{m+1}$. For $i \in \{0, \ldots, m-1\}$,
  we have $Q_i = R_i \cup Q_{i+1}$ and $Q_{i+1} \act{xv} Q_{i+2}$, and hence $Q_i \act{xv} (R_{i+1}' \cup Q_{i+2}) = Q_{i+1}$. Since $R_{i+1} = Q_{i+1} - Q_{i+2}$, we get $R_{i+1}' \supseteq R_{i+1}$. 
  
  By "safe-determinism" we thus have $|R_i|\geq |R_{i+1}|$. So $|R_m|\leq\frac{1}{m+1}|Q_0|$
  as $Q_0=R_0\cup\dots\cup R_m$ and all the $R_i$ are pairwise disjoint. 
  So the claim of the lemma follows from $R_m=\tgt(u,v(xv)^m)$ and $Q_0 = \tgt(u,v)$.
\end{proof}

The next two Lemmas~\ref{lem:Aae-central-size} and \ref{lem:x-size} deal with
the size of a witness for the first while loop in line~\ref{fp:while1} of
$\FindPointedL$, whose existence is guaranteed by
Lemma~\ref{lem:double-supported} for double supported pairs.

\begin{lem}\label{lem:Aae-central-size}
  Every state of $\Aae$ has a "central sequence" of size $<\asize^3$.
\end{lem}
\begin{proof}
  As first step, consider a state $q$ that is maximal in a sense that there is no $p$ such
that $L(p)=L(q)$ and $\Ls(q)\incl \Ls(p)$.
Let $p_1,\dots,p_k$ be an arbitrary enumeration of states such that $L(p_i)=L(q)$.
By the choice of $q$, for every $p_i$ there is $y_i$ such that $q\act{y_i}q'_i$ for some state $q_i'$,
and $p_i\bact{y_i}$. Since $\Aae$ is "normalized" (Lemma~\ref{lem:Aae-is-normalized}), $q$ and $q'_i$ are in the same safe SCC of $\Aa$ so
there is $q'_i\act{y'_i} q$. 
Taking $z_i=y_iy'_i$ we have $q\act{z_i}q$ and $p_i\bact{z_i}$.
The length of $y_i$ can be bounded by $\asize^2$, by considering the product of the
automaton with itself after removing all $1$-transitions.
The length of $y'_i$ can be bounded by $\asize-1$.
So the length of $z_i$ is bounded by $\asize^2+\asize-1$.

Now let us look at our fixed enumeration $p_1,\dots,p_k$.
If $z_1$ is "central" for $q$ then we are done.
Otherwise,  consider the
smallest $i_1$ such that $p_{i_1}\act{z_1} p'$ for some state $p'$. 
Observe that by "semantic-determinism" (Lemma~\ref{lem:Aae-sd-and-us}), $L(q)=L(p_{i_1})=L(p')$.
Hence, $p'=p_{j_1}$ for some $j_1$.
Clearly $i_1>1$.
If $z_1z_{j_1}$ is not "central", we can find the smallest $i_2$ such that
$p_{i_2}\act{z_1z_{j_1}} p_{j_2}$ for some $j_2$.  
Clearly $i_2>i_1$ as $p_{i_1}\bact{z_1z_{j_1}}$.
Continuing like this we get the required $z_q$. 
Since we do at most $k$ steps in this construction and $k$ is bounded by
$\asize-1$ we have that the size of $z_q$ is bounded by $(\asize-1)(\asize^2+\asize-1)=\asize^3-2\asize+1$. 

In the second step of the construction consider some $q'$ that is not maximal.
Hence, there is some $q$ with $L(q')=L(q)$ and $\Ls(q')\incl \Ls(q)$. Choose $q$ such that $\Ls(q)$ is maximal for inclusion. By Lemma~\ref{lem:states-in-Aae}, there is no other $p$ such
that $L(p)=L(q)$ and $\Ls(q)\incl \Ls(p)$, hence $q$ is maximal in the sense of
the first step, so there is a "central sequence" $z_q$ for $q$.
By Lemma~\ref{lem:Aae-central} we know that $q'$ has a "central sequence"
$z_{q'}$. 
Since $\Ls(q')\incl \Ls(q)$ we have $q\act{z_{q'}}q'$.
Since $\Aae$ is "normalized" (Lemma~\ref{lem:Aae-is-normalized}), $q$ and $q'$ are in the same safe SCC of $\Aa$.
So we have $q'\act{x} q\act{y} q'$ for some $x$ and $y$ with size of $xy$ at most $2\asize-2$.
It is easy to check that $xz_qy$ is also "central" for $q'$.
Given that the size of $z_q$ is at most $\asize^3-2\asize+1$ we get the desired
bound $\asize^3$.
\end{proof}

\begin{lem}\label{lem:x-size}
    If there is an $x$ such that $u\siml uvx$, $(u,vxv)\napproxbot$ and
  $u(vx)^\w\not\in L$ then there is one of size $<2\asize^3$.
\end{lem}
\begin{proof}
  We claim that the assumptions of the lemma imply that in $\Aae$ there is the
  following run for some $i \ge 0$:
  \[
    \aact{u}p_u\act{(vx)^i}p\act{v}q\act{x}p'\act{v}q'\bact{xv}
  \]
  The existence of such a run up to $q'$ follows from $(u,vxv)\napproxbot$.
  And $i$ can be chosen such that $q'\bact{xv}$ because $u(vx)^\w\not\in L$.

  Since $q\act{xv}q'$ and $q'\bact{xv}$, we can choose a shortest $y$ with this property $q\act{y}q'$ and $q'\bact{y}$, which is of length at most $\asize^2$. Let further $y'$ be shortest with $q' \act{y'} p'$, which exists since $\Aae$ is "normalized" (Lemma~\ref{lem:Aae-is-normalized}) and is of size $<\asize$. 
Then $x':=yz_{q'}y'$ for a "central sequence" $z_{q'}$ of $q'$ satisfies the claim of the lemma, as argued in the following. Indeed, $u\siml uvx'$ because $\aact{u}p\act{v}q\act{x'}p'$. The fact $\aact{u}p$ follows from $\aact{u}p_u\act{(vx)^i}p$ and $u(vx)^i \siml u$. Further, $\aact{u}p\act{vx'v}q'$, so $(u,vx'v) \napproxbot$. Finally, $u(vx')^\w\not\in L$ because for every $p''$ with $\aact{u}p''$ we
  have either $p''\act{vy}q''\bact{z_{q'}}$, or
  $p''\act{vy}q''\aact{z_{q'}}q'\act{y'}p'\act{v}q'\bact{y}$.

 The existence of $z_{q'}$ and an $\asize^3$ bound on its size come from
  Lemma~\ref{lem:Aae-central-size}.
  Together with the bounds on $y$ and $y'$ we have $|x'|<2\asize^3$.
\end{proof}

Lemma~\ref{lem:single-supported} below deals with the condition of the second
while loop in line~\ref{fp:while2} of $\FindPointedL$; see an explanation after
Definition~\ref{def:supported}, and Example~\ref{ex:long-SCC-path-learn} for an
illustration. 

\begin{lem}\label{lem:single-supported}
    Suppose $(u,v)$ is "single-supported", $uw\siml u$, and $(u,wv)\napproxbot$
  is not "pointed".
  There is $x\in\S^+$ such that $u\siml uwvx$ and $\bot\napproxl
  (u,wvxv)\napproxl (u,wv)$. Moreover, for every such $x$, $\tgt(u,wvxv)\subsetneq
  \tgt(u,wv)$.
  Additionally, there is such $x$ of size $<2\asize^2$.
\end{lem}
\begin{proof}
  We first show that $\tgt(u,wvxv)\incl\tgt(u,wv)$ for every $x$ such that $u\siml uwvx$.
  For $q\in\tgt(u,wvxv)$ we get $\aact{u}p\act{w}q_1\act{v}q_2\act{x}q_3\act{v}q$.
  By $uw\siml u$ we get $q_2\in \tgt(u,v)$, and by $u\siml uwvx$ we get
  $q\in\tgt(u,v)$.
  But since $(u,v)$ is "single-supported", $q=q_2$.
  Since $q_2\in\tgt(u,wv)$ we get $q\in\tgt(u,wv)$.
  
We now show by contraposition that the additional condition $(u,wvxv)\napproxl (u,wv)$ implies strict inclusion $\tgt(u,wvxv)\subsetneq \tgt(u,wv)$. So assume that $\tgt(u,wvxv) = \tgt(u,wv)$ and let $y \in \Sigma^*$ be such that $uwvy \siml u$. We need to show that $u(wvy)^\omega \in L \Leftrightarrow u(wvxvy)^\omega \in L$ for concluding that $(u,wvxv)\approxl (u,wv)$. If any of the two words is in $L$, then there must be an accepting run starting with $\aact{u}p\act{w}q_1\act{v}q_2$. Then $q_2 \in \tgt(u,wv)$ and thus $q_2 \in  \tgt(u,wvxv)$. As already argued above, we then have $q_2 \act{xv} q = q_2$ because $(u,v)$ is "single-supported". So there is an $xv$-loop with rank $2$ transitions on $q_2$. Now the accepting run on either $u(wvy)^\omega$ or $u(wvxvy)^\omega$ continues from $q_2$ with $ywv$. Since $uwvy \siml u$ and $(u,v)$ is "single-supported", we get that $q_2 \act{ywv} q_2$ (if it would reach a different state, we would get a second state in $\tgt(u,v)$ that is in the same safe SCC as $q_2$). But this means that there is also a $ywv$-loop on $q_2$ with rank $2$ transitions, and hence both $u(wvy)^\omega$ and $u(wvxvy)^\omega$ are accepted if one of them is.  

  We now show the existence of an $x$ as claimed in the statement of the lemma. 
  Since $(u,wv)$ is not "pointed", $\tgt(u,wv)$ is not a singleton, by
  Lemma~\ref{lem:theta-pointed}.
  So there are two distinct $q,q'\in\tgt(u,wv)$. 
  They must be in different "safe SCCs" as $(u,v)$ is "single-supported". This implies that $\Ls(q) \not\subseteq \Ls(q')$ by Lemma~\ref{lem:Aae-central} (a "central sequence" for $q$ witnesses this non-inclusion). Hence, there is $y \in \Sigma^+$ with $q\act{y}$ and  $q'\bact{y}$.
  By "normality" (Lemma~\ref{lem:Aae-is-normalized}) there is $y'$ such that $q\act{yy'}p$, where $p$ is from a run witnessing $q \in \tgt(u,wv)$ as follows:
  \[
    \aact{u}p_u\act{w}p\act{v}q\act{yy'}p
  \]
  We take $x=yy'$. Note that $y$ can be chosen of length at most $\asize^2$ and $y'$ of length at most $\asize-1$. So the size of $x$ is bounded by $\asize^2+\asize-1$ that is at most $2\asize^2$. We now show that $x$ satisfies the desired properties. 
  First, $q\in\tgt(u,wvxv)$ so $(u,wvxv)\napproxbot$ by Lemma~\ref{lem:theta-empty}.
  Then $u\siml uwvx$ because the run ends in $p$ and $uw \siml u$.
  It remains to show that $(u,wvxv) \napproxl (u,wv)$. We argue that $q' \in \tgt(u,wv) \setminus \tgt(u,wvxv)$. Assume to the contrary that $q'\in\tgt(u,wvxv)$.
  Then there is a run $\aact{u}p''_u\act{w}p''\act{v}q''\act{yy'v}q'$, for some
  $p''_u,p''$ and $q''$. Then $q''$ is in the same "safe SCC" as $q'$, and thus, $q'' = q'$ because $q', q'' \in \tgt(u,wv)$ and $(u,wv)$ is "single-supported". Since $q'' \act{y}$ and $q' \bact{y}$, we get a contradiction. Hence, $q' \not\in \tgt(u,wvxv)$ as desired. Now take a "central sequence" $z_{q'}$ for $q'$. Then $q' \in \tgt(u,wvz_{q'}) \not= \emptyset$ but $\tgt(u,wvxvz_{q'}) = \emptyset$, so $(u,wvz_{q'}) \napproxbot$ and $(u,wvxvz_{q'}) \approxbot$ by Lemma~\ref{lem:theta-empty}. Hence $(u,wvxv) \napproxl (u,wv)$.
\end{proof}

We are ready to prove that the algorithm $\FindPointedL$ in
Listing~\ref{lst:find-pointed} is correct, runs in polynomial time, and produces
a pointed pair of polynomial size.
\begin{prop}\label{prop:idealized-pointed-finding-correct}
  Given $(u,v)\napproxbot$ with $v \not= \e$ such that $u\siml uv$, $\FindPointedL(u,v)$ in
  Listing~\ref{lst:find-pointed} returns a "pointed" element $(u,w\bv)\napproxbot$. Moreover, the algorithm runs
  in time polynomial in $|\Aae|$, $|w\bv|\leq 2|\Aae|^2|v|+4|\Aae|^6$ and the initial $v$ is a
  prefix of $w\bv$.
\end{prop}  
\begin{proof}
  Recall that we use $\asize$ for the size of $\Aae$.
  If $(u,v)$ is not "pointed" then 
  $\tgt(u,v)$ is not a singleton, by Lemma~\ref{lem:theta-pointed}.
  
  The first case is when $(u,v)$ is "double-supported".
  At the beginning we consider $\bv=v$.
  By Lemma~\ref{lem:double-supported} there
  is $x$ satisfying the condition of the first while loop. 
  In the loop $\bv$ is changed to $\bv(x\bv)^m$ for some $m$. 
  Observe, that $u\siml u\bv(x\bv)^m$, and $(u,\bv(x\bv)^m)\napproxbot$ by the
  choice of $m$, so we can repeat the reasoning for $(u,\bv)$
  where $\bv$ is changed to $\bv(x\bv)^m$.
  Let us note that $m\geq 1$, so indeed $\bv$ is prolonged in this step and
  $\tgt(u,\bv)$ decreases.
  
  When the condition of the first while loop is no longer true then $(u,\bv)$
  is "single-supported" (Lemma~\ref{lem:double-supported}).
  Either $(u,w\bv)$ is "pointed", or by Lemma~\ref{lem:single-supported} there
  is $x$ satisfying the condition of the second while loop.
  In this case we change $w$ to $w\bv x$, and Lemma~\ref{lem:single-supported}
  says that $\tgt(u,w\bv x\bv)$ is strictly included in $\tgt(u,w\bv)$.
  So $(u,w\bv x\bv)$ is "single-supported", and we can repeat the reasoning
  after updating $w$ to $w\bv x$. Thus, the algorithm terminates with a pointed pair $(u,w\bv)$. 
  By examining how $\bv$ and $w$ are constructed we can see that the initial $v$
  is always a prefix of $w\bv$. 
  This shows the last property stated in the proposition.

  It remains to check that the algorithm runs in polynomial time and that the
  size of $w\bv$ is bounded as stated in the proposition.
  Let $v_i$ denote the value of $\bv$ just after the $i$-th iteration of the
  first while loop in line~\ref{fp:while1}.
  Let $m_i$ be the value of $m$ at the same moment.
  We use $v_0$ for the initial $\bv$.
  By Lemma~\ref{lem:x-acceleration} we have 
  \begin{equation}\label{eq:tgt}
    |\tgt(u,v_{i})|\leq \frac{1}{m_i+1}|\tgt(u,v_{i-1})|\ .
  \end{equation}
  As we have observed above, $m_i\geq 1$, so we have in particular 
  $|\tgt(u,v_{i+1})|< |\tgt(u,v_i)|$, for all $i$.
  Thus, the number $k$ of iterations of the first while loop is at most $\asize$. 
  As $x$ used in this loop comes from Lemma~\ref{lem:x-size}, its size is at
  most $\b:=2\asize^3$ giving us $|v_{i}|\leq (m_i+1)|v_{i-1}| + m_i\b$.
  We check that for all $i=1,\dots,k$:
  \begin{equation}\label{eq:bound}
    |v_i|\leq \left(\textstyle\prod_{j=1}^{i}(m_j+1)\right)(|v_0|+i\b)
  \end{equation}
  This holds for $i=1$ as $|v_1|\leq (m_1+1)|v_0| + m_1\b$.
  By induction for $i>1$ we have
  \begin{align*}
    |v_{i}|&\leq (m_{i}+1)|v_{i-1}| + m_{i}\b\\
    &\leq (m_{i}+1)\left(\textstyle\prod_{j=1}^{i-1}(m_j+1)\right)(|v_0|+(i-1)\b) + m_{i}\b\\
    &= \left(\textstyle\prod_{j=1}^{i}(m_j+1)\right)(|v_0| +(i-1)\b) + m_i\b\\
    &\leq \left(\textstyle\prod_{j=1}^{i}(m_j+1)\right)(|v_0| +i\b)
  \end{align*}
  The last inequality holds as $m_i\leq \prod_{j=1}^{i}(m_j+1)$.

  By iterated application of equation~\eqref{eq:tgt} we have $\tgt(u,v_k)\leq
  \frac{\tgt(u,v_0)}{\prod_{i=1}^{k}(m_i+1)}$.
  Since $|\tgt(u,v_k)| \ge 1$, we get $\prod_{i=1}^{k}(m_i+1)\leq \tgt(u,v_0)\leq \asize$.
  Plugging this bound into~\eqref{eq:bound} for $v_k$, we get $|v_k|\leq
  \asize(|v_0|+\asize\b)$. 
  Finally, since $\b=2\asize^3$ and $v_k$ is the result of the first while loop, we get $|\bv|\leq \asize|v_0|+2\asize^5$ at the end of the first while loop.
  
  Now let us turn to the second while loop.
  By Lemma~\ref{lem:single-supported},  $|\tgt(u,w\bv x\bv)|<|\tgt(u,w\bv)|$,
  hence the loop executes at most $\asize$ times.
  Each time $x$ is found thanks to Lemma~\ref{lem:single-supported}, so $|x|\leq 2\asize^2$. 
  Hence, at each iteration of the loop we prolong $w$ by at most $|\bv|+2\asize^2$.
  This gives us a bound on the final size of $w$ as $|w|\leq
  \asize(|\bv|+2\asize^2)$.
  Plugging in our bound on $|\bv|$ from the previous paragraph we obtain
  $|w|\leq \asize(\asize|v_0|+2\asize^5+2\asize^2)$.
  So finally, $|w\bv|\leq \asize^2|v_0|+2\asize^6+2\asize^3+\asize|v_0|+2\asize^5$ that we can bound by
  $2\asize^2|v_0|+4\asize^6$. 
\end{proof}


\subsection{Passive learning algorithm} \label{sub:passive-learner}

The true learning algorithm does not have access to the language $L$, but only
to a "sample" $S$. 
In this section we show how to implement the idealized algorithm from
Listing~\ref{lst:algo} in a passive learning algorithm $\Learn(S)$ presented in
Listing~\ref{lst:palgo}.
It is parameterized by a "sample" $S$ given on the input.
Each line of the idealized algorithm is implemented by a corresponding
function depending on $S$. 
Compared to the idealized algorithm from Listing~\ref{lst:algo}, the algorithm
from Listing~\ref{lst:palgo} has abort statements in
lines~\ref{palg:if-bot},~\ref{palg:x} and~\ref{plag:if-lang}.  
These are needed in case a "sample" has some incomplete information about $L$ that
makes our algorithm return incoherent results. In this case the algorithm returns
some default automaton that accepts precisely $S^+$. Such an automaton is easy
to construct, so we do not detail it here.
This choice of a default automaton, together with the termination condition of the algorithm, guarantees that the algorithm is a "consistent" learner.

The result we prove is:
\begin{thm} \label{thm:learner}
  $\Learn(S)$ is a "consistent" polynomial time "passive learner" that learns the
  minimal history-deterministic co-Büchi automaton $\Aae$ for each co-Büchi
  language $L$ in the limit from "polynomial data". 
\end{thm}


Compared to the idealized algorithm, the passive learning algorithm has to
deduce information about the language $L$ from a "sample". So all the tests that
refer to the language $L$, like membership or equality, have to be relativized
to the "sample" $S$. This is done by the following definitions. 
\begin{itemize}
  \item \AP$(u,v)\iins L$ if $(u,v)\in S^+$. We sometimes write $uv^\w\iins L$
  for readability 
  \item \AP$(u,v)\inins L$ if $(u,v)\in S^-$; also written as $uv^\w\inins L$.
  \item \AP$L\ineqs L(\Aa)$ for some automaton $\Aa$ if there is $(u,v)$
  with either  $(u,v)\ins L$ and $uv^\w\not\in L(\Aa)$, or $(u,v)\nins L$ and $uv^\w\in L(\Aa)$.
  \item \AP$x\insims y$ if there is $(u,v)$ such that either $(xu,v)\ins L$ and 
  $(yu,v)\nins L$, or  $(xu,v)\nins L$ and 
  $(yu,v)\ins L$.
\end{itemize}
The most important thing to notice, is that $(u,v)\nins L$ is not the negation
of $(u,v)\ins L$.
Indeed, it may happen that neither $(u,v)\ins L$ nor $(u,v)\nins L$ hold as
there is no information about $(u,v)$ in the "sample". 
For this reason the formulation of these definitions is so long, as all the
tests should be used positively.
This guarantees that these definitions are stable under extensions:
if they hold for $S$ then they hold for every $S'$ "extending" $S$. 

\begin{lstlisting}[caption={Passive learning algorithm for $L$},label=lst:palgo,float,frame=lines,escapechar=|]
function $\iLearn(S)$:
  $R:=\FindR(S)$
  $NT:=\FindNT(R,S)$ |\label{palg:lineE}|
  forevery $u\in R-NT$ add $C(u,\e)$ to $\Cc$.
  $\Aa=\automaton(R,\Cc,S)$
  while $L\neqs L(\Aa)$ do |\label{palg:while}|
    $u:=\Findu(\Aa,NT,R,S)$; abort if $u=\bot$|\label{palg:u}\label{palg:if-bot}|
    $x:=\Findx(u,\Aa,R,S)$; abort if $x=\bot$ |\label{palg:x}|
    $d:=\max(\set{|C| :  C\in \Cc}\cup\set{1})$ |\label{palg:d}|
    $(u,v):=\FindPointedS(u,x^d,R,S)$; |\label{palg:pointed}|
    $C:=\construct(u,v,R,S)$ |\label{palg:C}|
    add $C$ to $\Cc$
    $\Aa'=\automaton(R,\Cc,S)$; if $L(\Aa)\subsetneq L(\Aa') \incl S^+$ then $\Aa:=\Aa'$ else abort |\label{plag:if-lang}|
  return($\Aa$)
\end{lstlisting}

The passive learning algorithm from Listing~\ref{lst:palgo} follows line by line
the idealized algorithm from Listing~\ref{lst:algo}. 
Each line of the idealized algorithm is replaced by a function call. 
For each function in Listing~\ref{lst:palgo} we show that there is $S$ polynomial in the
size of $\Aae$, for which the function computes the right answer, meaning the
answer computed in the corresponding line of the idealized algorithm.
Moreover, there is such $S$ that is "stabilizing" in the following sense.

\begin{defi}
	A "sample" $S$ that is "consistent" with $L$ is ""$L$-stabilizing"" for a function $Func$ if for every $S'$ "consistent" with $L$ such that $S'\supseteq S$, we
	have $Func(S')=Func(S)$. We just say stabilizing in the following since $L$ is clear from the context.
\end{defi}

\begin{lstlisting}[caption={Finding $\Rsim$},label=lst:R,float,frame=lines]
function $\iFindR(S)$: 
  $R:=\set{\es}$
  repeat
    $D:=\set{x : \forall y\in R.\ x\nsims y}$
    if $D\not=\es$ then add $\min(D)$ to $R$
  until $D=\es$
  return($R$)
\end{lstlisting}

The notion of "stabilizing" $S$ can be seen in action already in the first line of
the algorithm from Listing~\ref{lst:palgo}.
The goal of function $\FindR(S)$ is to compute $\Rsim$, the set of minimal
representatives of all $\siml$ classes.
The algorithm successively adds to $D$ the  "smallest" word that
is in the relation $\nsims$  to all the words already in $D$, as already explained on page~\pageref{def:Rsim}.

\begin{restatable}{lem}{lemFindR}\label{lem:FindR}
	There is a "stabilizing" $S$ of size polynomial in $|\Aae|$ such that
	$\Rsim=\FindR(S)$.  Moreover, the algorithm runs in time polynomial in
	$S$.
\end{restatable}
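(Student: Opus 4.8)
The plan is to exhibit a characteristic sample for $\FindR$ explicitly, and then check that Listing~\ref{lst:R} returns $\Rsim$ on it and, more generally, on every consistent extension of it; this simultaneously gives correctness, the stabilizing property, and polynomial size. Write $\Rsim=\{u_1,\dots,u_k\}$ with $u_1<u_2<\dots<u_k$ in length-lexicographic order, so $u_1=\e$. Since $\Aae$ is semantically-deterministic and (being complete with a non-empty initial set) every finite word is read to a state of $\Aae$, whose residual is then $u^{-1}L$ by induction on $|u|$ using Proposition~\ref{prop:Aae-accepts-L} for the initial states, the number $k$ of $\siml$-classes is at most $\asize:=|\Aae|$, and the length-lexicographically least representative $u_i$ of each class has length $<\asize$. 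For each pair $i<j$ we have $u_i\nsiml u_j$, so there is an ultimately periodic word $u_{ij}v_{ij}^\w$ with $v_{ij}\neq\e$ such that $u_iu_{ij}v_{ij}^\w\in L$ iff $u_ju_{ij}v_{ij}^\w\notin L$; the crucial ingredient is that $u_{ij}v_{ij}$ can be chosen of size polynomial in $\asize$. Put into $S$ the pairs $(u_iu_{ij},v_{ij})$ and $(u_ju_{ij},v_{ij})$ for all $i<j$, sorted into $S^+$ or $S^-$ according to whether the word they represent is in $L$. By construction $S$ is consistent with $L$ and of size polynomial in $\asize$.

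The correctness argument is an invariant on the loop of Listing~\ref{lst:R}, established for an arbitrary $S'$ with $S\subseteq S'$ and $S'$ consistent with $L$: at the start of each iteration, $R=\{u_1,\dots,u_j\}$ with $j=|R|$. Initially $R=\{\e\}=\{u_1\}$. For the step, assume $R=\{u_1,\dots,u_j\}$ and compute $D=\{x:\forall y\in R.\ x\nsims y\}$ from $S'$. First, no $x$ with $x<u_{j+1}$ lies in $D$: the length-lexicographically least representative of the $\siml$-class of $x$ is $\leq x<u_{j+1}$, hence equals some $u_i$ with $i\leq j$, so $x\siml u_i$; since $S'$ is consistent with $L$ it contains no witness for $x\nsims u_i$, so $x\notin D$ (in particular $u_i\notin D$, taking $x=u_i$). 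Second, if $j<k$ then $u_{j+1}\in D$: for each $i\leq j$ the pairs $(u_{j+1}u_{i,j+1},v_{i,j+1})$ and $(u_iu_{i,j+1},v_{i,j+1})$ lie in $S\subseteq S'$ and witness $u_{j+1}\nsims u_i$. Hence $\min(D)=u_{j+1}$, and adding it preserves the invariant. Finally, if $j=k$ then every word is $\siml$-equivalent to some $u_i\in R$, so $D=\es$ by the first observation, and the loop stops and returns $R=\Rsim$. Thus $\FindR(S')=\Rsim$ for every consistent $S'\supseteq S$; in particular $\FindR(S)=\Rsim$, and $S$ is stabilizing.

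For the running time, the invariant gives $R\subseteq\Rsim$, hence $|R|\leq k$, and every iteration but the last enlarges $R$, so the loop runs at most $k+1\leq\asize+1$ times. Within one iteration it suffices to compute $\min(D)$ or to detect $D=\es$: if $x\in D$ then $x\nsims y$ for some $y\in R$, so some pair $(xw,w')$ belongs to $S'$, i.e.\ $x$ is a prefix of a first component of a pair of $S'$; there are only polynomially many (in $|S'|$) such prefixes, and for each the test ``$\forall y\in R.\ x\nsims y$'' reduces to searching $S'$ for witnesses and is polynomial in $|S'|$. Hence $\FindR$ runs in time polynomial in its input, and $|S|$ is polynomial in $|\Aae|$ as observed above.

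The one genuinely non-routine ingredient, and what I expect to be the main obstacle, is the polynomial bound on the size of an ultimately periodic word separating two $\siml$-classes. Routing through a deterministic automaton for $L$ gives only an exponential bound, since a deterministic co-Büchi automaton for $L$ can be exponentially larger than $\Aae$. Instead one has to use that every residual of $L$ is a co-Büchi language recognised by a sub-automaton of $\Aae$ with at most $\asize$ states: from a witness $\omega$-word in the symmetric difference of two residuals one cuts a bounded-length prefix reaching a non-trivial safe SCC on the accepting side, and then pumps a bounded-length loop there, using that on the other side the word keeps forcing rank-$1$ transitions, to obtain a separating ultimately periodic word of polynomial-size representation. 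Everything else is exactly the DFA-style reasoning the paper alludes to.
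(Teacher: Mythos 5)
Your proposal takes essentially the same route as the paper: build a characteristic sample out of pairwise-separating ultimately periodic examples for the $\siml$-classes, and verify that the loop of $\FindR$ then enumerates $\Rsim$ in length-lexicographic order and is stable under consistent extensions. The one point of divergence is how you treat the length bound on the separating words: the paper's own proof simply asserts that each separating word ``is bounded by the size of $\Aae$'' without further argument, whereas you flag this bound as the non-routine ingredient. Your pumping sketch for it is not yet a proof as stated — after cutting a prefix and pumping a safe loop on the $u_i$-side, one still has to argue that the resulting ultimately periodic word remains rejected from every state of the $u_j$-class (the rejecting side need not ``keep forcing rank-$1$ transitions'' once the word changes), and the claim that a bounded-length prefix already reaches the eventual safe SCC also needs justification — but the paper leaves exactly this gap unaddressed as well, so this is a shared imprecision rather than a defect of your argument relative to the paper's. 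Everything else (the loop invariant $R=\{u_1,\dots,u_j\}$, $\min(D)=u_{j+1}$, the polynomial running time via prefixes of words in $S$) matches the paper's intent and fills in detail the paper compresses.
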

\begin{proof}
	Consider $\Rsim=\set{x_1,\dots,x_k}$ where the $x_i$ are the minimal representatives of the $\siml$-classes. The length of each $x_i$ and the number $k$ is bounded by the number of states of $\Aae$.  
	For every $x,y\in \Rsim$ with $x \not= y$ there are $u,v$ such that
	$xuv^\w\in L \iff yuv^\w\not\in L$.
	These are at most quadratically many witnesses that are required for pairwise
	separation of all elements from $\Rsim$. We show that their length can be
	bounded polynomially.        
        
        Wlog.\ we can assume that $xuv^\omega \in L$ and $yuv^\omega \notin L$. Further, we can assume that $xu \siml xuv$ and $yu \siml yuv$ and that there is some state $p$ such that $\aact{xu} p \act{v} p$. If this is not the case, then we can replace $u$ by $uv^m$ and $v$ by $v^n$ for appropriately chosen $m,n$ such that the conditions are satisfied. The length of $u$ can then be bounded by the product of $\siml$ with itself ($u$ just needs to reach a certain pair of $\siml$-classes from $x,y$). 
        Further, consider the set $Q_{yu}$ of states that is reachable in $\Aae$ via $yu$ from an initial state. Because $yuv^\omega \notin L$, for each state $q \in Q_{yu}$ there is $m$ such that $q \bact{v^m}$. In particular, the set
        \[
        W_{p,q} = \{w \in \Sigma^* \mid xu \siml xuw, yu \siml yuw, p \act{w} p, \text{ and } q \bact{w}\}
        \]
        is not empty, and it can be accepted by a DFA of size quadratic in $\Aae$.
        Now consider a word $v' = v_1\cdots v_n$ constructed as follows. Start with an arbitrary state $q_1 \in Q_{yu}$ and let $v_1 \in W_{p,q_1}$. Let $Q_{yu}^1$ be the set of states with $Q_{yu} \act{v_1} Q_{yu}^1$. Since $yu \siml yuv_1$ and $q_1 \bact{v_1}$, we get that $Q_{yu}^1$ is a strict subset of $Q_{yu}$ by "safe-determinism" and "semantic-determinism" of $\Aae$. If $Q_{yu}^1 \not= \emptyset$, let $q_2 \in Q_{yu}^1$ and choose $v_2 \in W_{p,q_2}$. We continue the construction until $Q_{yu}^n = \emptyset$, and let $v' :=v_1 \cdots v_n$.
        Then $v' \in W_{p,q}$ for all $q \in Q_{yu}$, and thus $xu(v')^\w \in L$ and $yu(v')^\omega \notin L$.  Note that each $v_i$ can be chosen of length quadratically in $\Aae$, so the length of $v'$ is at most cubic in $\Aae$ and thus polynomial.

        Let the  "sample" $S$ contain separating witnesses of polynomial length for all $x,y \in \Rsim$ with $x \not= y$.
	On the "sample" $S$, the algorithm returns $\Rsim$. 
	On every bigger "sample" it must also return $\Rsim$ due to the minimality of elements
	in $\Rsim$.
  The algorithm runs in polynomial time because the set $D$ is bounded polynomially in the size of $S$ since $D$ consists of prefixes from words in $S$ (by definition of $\nsims$).
\end{proof}

Assuming we have $\Rsim$, we can define further tests on $S$:
\begin{itemize}
  \item  \AP$x\isims y$ if there is $u\in \Rsim$ such that for all $u'\in
  \Rsim$, $u'\not=u$, we have $x\nsims u'$ and $y\nsims u'$ \AP
  \item $(u,v)\inapproxsbot$ if $v = \e$ or there is $x\in\S^*$ such that $uvx \sims u$ and $(u,vx)\ins L$. 
  \item \AP$(u,v)\inapproxs (u',v')$ if either $u\nsims u'$, or $uv\nsims
  u'v'$, or there is $x\in\S^*$ such that $uvx\sims u$ and either $(u,vx)\ins
  L$ and $(u',v'x)\nins L$, or $(u,vx)\nins L$ and $(u',v'x)\ins  L$.
\end{itemize} 
These definitions are also stable under extensions, namely if $S$ is big enough
for $\Rsim$ to have representatives of all $\siml$ classes, and has the
necessary distinguishing examples for witnessing $x\sims y$,
$(u,v)\napproxsbot$, or $(u,v)\napproxs (u',v')$ then this also holds for every
$S'$ that is consistent with $L$ and "extends" $S$. 

\begin{rem}~\label{rem:poly-pairs}
  For $x\sims y$ we need $4(|\Rsim|-1)$ pairs in $S$. For $(u,v) \napproxbot$ we need pairs for $uvx \sims u$ and one additional pair for $(u,vx)\ins L$. For $(u,v)\inapproxs (u',v')$ we either need two pairs for witnessing $u\nsims u'$, or $uv\nsims  u'v'$, or we need the pairs for $uvx\sims u$ and additionally two pairs for the membership tests.
\end{rem}

\begin{rem}
  We manage to have a "stabilizing" test $x\sims y$ thanks to the enumeration $\Rsim$ of
  all equivalence classes of $\sims$ relation. 
  It is tempting to follow the same route and get a test for $(u,v)\approxl
  (u',v')$ by enumerating all classes of $\approxl$ relation. 
  The problem is that there are exponentially many such classes with respect to
  the size of the minimal automaton, cf. Example~\ref{ex:ak-2}, so we cannot do this if we want
  polynomial size samples and algorithms. 
\end{rem}

Now we consider line~\ref{alg:lineE} of Listing~\ref{lst:algo} and implement it as a function
$\FindNT(R,S)$ given in Listing~\ref{lst:find-NT}.
The function when called with $\Rsim$ returns the set $\NTsim$ of all $u\in \Rsim$
such that there is $x$ with $ux^\w\ins L$ and $ux\sims u$. 
Thanks to $\FindR(S)$ we can assume that $\FindNT$ has access to $\Rsim$, simply
because $S$ can be big enough for $\FindR$ to return the correct $\Rsim$.
Then in the rest of the algorithm, the $x \sims y$ test that uses $\Rsim$ is
equivalent to the $x\siml y$ test. 

\begin{lstlisting}[caption={Finding $\NTsim$},label=lst:find-NT,float,frame=lines,escapechar=|]
function $\iFindNT(R,S)$:
  $NT=\es$
  forevery $u\in R$
      if there is $x$ with $ux^\w\ins L$ and $ux\sims u$ then |\label{alg:findE}|
      $NT:=NT\cup\set{u}$
  return($NT$)
\end{lstlisting}

\begin{lem}
  There is a "stabilizing" $S$ of size polynomial in $|\Aae|$ such that $\NTsim=\FindNT(\Rsim,S)$.
  Moreover, the algorithm runs in time polynomial in $S$.
\end{lem}
\begin{proof}
  Simply $S$ should be big enough to have all the witnesses required in
  line~\ref{alg:findE} of Listing~\ref{lst:find-NT} for every $u\in\Rsim$. 
  The test $ux^\w\ins L$ requires the pair $(u,x)\in S^+$.
  The test $ux\sims u$ requires a number of pairs linear in $|\Rsim|$ pairs, as
  we have noted in Remark~\ref{rem:poly-pairs}.
\end{proof}
Next, in Listing~\ref{lst:automaton} we construct an automaton for the given set
of components.
We can use $\sims$ tests because we can assume that $R=\Rsim$. 
We need $\Oo(|\Rsim|)$ elements in $S$ for each test.
Using this test, the function $\automaton$ simply adds the rank $1$ transitions to the components, and defines the initial states. It is clear that it works in polynomial time.

\begin{lstlisting}[caption={Constructing automaton for $\Cc$},label=lst:automaton,float,frame=lines]
function $\iautomaton(R,\Cc,S)$:
  $Q^\Aa=\bigcup\set{Q : \struct{Q,\D}\in \Cc}$
  $\D^\Aa_2=\bigcup\set{\D : \struct{Q,\D}\in \Cc}$
  $\D^\Aa_1=\{(u,v)\aact{a:1}(u',v'):a\in\S,\ (u,v),\ (u',v')\in Q^\Aa,\ uva\sims u'v'\}$
  $\init=\set{(u,v)\in Q^\Aa : uv\sims \e}$
  return($\struct{Q^\Aa,\D^\Aa_1\cup\D^\Aa_2,\init}$)
\end{lstlisting}

The following step is the test in line~\ref{palg:while} of Listing~\ref{lst:palgo}.
It is realized by direct inclusion test $L\neqs L(\Aa)$ that we have defined
above.
This step requires iterating over all elements from the "sample" and to check if the positive examples are accepted by $\Aa$, and the negative ones are rejected by $\Aa$, which can be done in polynomial time.

Next, we come to line~\ref{palg:u} of Listing~\ref{lst:palgo}.
The function call implements finding a $u$ from the same line of Listing~\ref{lst:algo}.
The code for this function is given in Listing~\ref{lst:findu}. 
It is a simple search for $u_i$ such that there is $(u_i,x) \ins L$ with $u_ix\sims u_i$ and
$u_ix^\w\notin L(\Aa)$.

\begin{lstlisting}[caption={Finding $u_i$},label=lst:findu,float,frame=lines]
function $\iFindu(\Aa,R,S)$:
  Let $R=\set{u_1,\dots,u_k}$
  $i:=0$
  repeat
    $i=i+1$
    found:=false
    forevery $(u_i,x) \ins L$  do
      if $u_ix\sims u_i$ and $u_ix^\w\not\in L(\Aa)$ then found:=true
  until found or $i=k$
  if found then return($u_i$) else return($\bot$)
\end{lstlisting}

\begin{lem}
  For every $\Aa$ there is a "stabilizing" $S$ of size polynomial in
  $|\Aa|+|\Aae|$ such that $u_i=\Findu(\Aa,\Rsim,S)$ is the same as in 
  line~\ref{alg:u} of Listing~\ref{lst:algo}. Moreover, $\Findu(\Aa,\Rsim,S)$ runs in time
  polynomial in the size of $\Aa$, $\Rsim$ and $S$. 
\end{lem}
\begin{proof}
  The algorithm uses only operations that are stable under extension. 
  The for-loop of the algorithm runs in polynomial time as the
  number of possible $x$ is bounded by the size of $S^+$, and the test $u_ix^\w\not\in L(\Aa)$ is polynomial in $u_i$, $x$, and $\Aa$.
  A "sample" $S$ that is big enough to have correct $\Rsim$ and contains some $(u_i,x) \in S^+$ for the $u_i$  from line~\ref{alg:u}  of the idealized algorithm in Listing~\ref{lst:algo}, gives the correct $u_i$.
\end{proof}

The next step is finding $x$ from line~\ref{palg:x}. 
This is the  "smallest" $x$ such that $u_ix\sims u_i$ and $u_ix^\w\not\in L(\Aa)$.
This is realized by the function in Listing~\ref{lst:findx}.
\begin{lstlisting}[caption={Finding $x$},label=lst:findx,float,frame=lines]
function $\iFindx(u,\Aa,R,S)$:
  $D:=\set{y : (u,y)\ins L, uy\sims u}$
  $x:=\min\set{y\in D: uy^\w\not\in L(\Aa)}$ // $x=\bot$ if no such $y$
  return($x$)
\end{lstlisting}

\begin{lem}
  There is a "stabilizing" $S$ of polynomial size in $|\Aa|+|\Aae|$ such that $x=\Findx(u,\Aa,\Rsim,S)$ is as in
  line~\ref{alg:x} of Listing~\ref{lst:algo}, provided that $u$ equals $u_i$ as computed in line~\ref{alg:u} of Listing~\ref{lst:algo}. Moreover, $\Findx(u,\Aa,\Rsim,S)$ runs in time
  polynomial in the size of its arguments. 
\end{lem}
\begin{proof}
  The size of $D$ is bounded by the size of $S$.
  For the algorithm to return the correct answer, it is enough that it has the
  pair $(u_i,x)$ in $S^+$.
\end{proof}

The $d$ in line~\ref{palg:d} of Listing~\ref{lst:palgo} is defined in the same
way as $d$ in the same line of Listing~\ref{lst:algo}.
Constructing a component $C(u,v)$ from line~\ref{alg:C} of
Listing~\ref{lst:algo} is done in line~\ref{palg:C} of Listing~\ref{lst:palgo}
by the function $\construct$ presented in Listing~\ref{lst:constr}.
While its pseudo code is quite long, it is a simple application of
the definition using what we have already computed. If the "sample" contains enough distinguishing examples, then the set $K$ that is computed corresponds to $\Rappl(u,v)$. This is similar to the computation of $\Rsim$.

\begin{lstlisting}[caption={Constructing $C(u,v)$},label=lst:constr,float,frame=lines]
function $\iconstruct(u,v,R,S)$:
  $K:=\set{\e}$
  repeat
    $D:=\set{x : (u,vx) \napproxsbot\text{ and }\forall y\in K.\ (u,vx)\napproxs (u,vy)}$
    if $D\not=\es$ then add $\min(D)$ to $K$
  until $D=\es$
  $T=\{(x,a,y)\in K\times\S\times K:$ 
         $\forall z\in K.\ z\not=y\imp (u,vxa)\napproxs(u,vz)\}$
  $Q=\set{(u,vx): x\in K}$
  $\D=\set{(u,vx)\act{a}(u,vy) : (x,a,y)\in T}$
  return($\struct{Q,\D}$)
\end{lstlisting}

The last missing step is the implementation of the function $\FindPointedL(u,v)$ from
Listing~\ref{lst:find-pointed}. Actually it needs only minimal changes to be
implemented in the passive setting: the tests with the subscript $L$ are changed to
tests with the subscript $S$.
The implementation is called $\FindPointedS(u,v,R,S)$ and is shown in
Listing~\ref{lst:find-pointed-passive}.

\begin{lstlisting}[caption={Finding a "pointed" $(u,w\bv)$ extending $(u,v)$},label=lst:find-pointed-passive,float,frame=lines,escapechar=|]
function $\iFindPointedS(u,v,R,S)$:
  Set $\bv=v$.
  while $\exists x$ s.t. $u\sims u\bv x$, $(u,\bv x \bv)\napproxsbot$ and $u(\bv x)^\w\nins L$
    Find the $\text{\kl{smallest}}$ such $x$. |\label{line:find-x-i}|
    $\bv:=\bv(x\bv)^m$, where $m$ the biggest s.t., $(u,\bv(x\bv)^m)\napproxsbot$. |\label{line:find-m}|
  Set $w=\e$.
  while $\exists x$ s.t. $u\sims uw\bv x$, $(u,w\bv x\bv)\napproxs\set{\bot,(u,w\bv)}$
    Find the $\text{\kl{smallest}}$ such $x$. |\label{line:find-x-ii}|
    $w:=w\bv x$.
  return $(u,w\bv)$  
\end{lstlisting}

\begin{lem}
  There is a "stabilizing" $S$ of polynomial size in $|\Aae|$ such that
  $(u,w\bv)=\FindPointedS(u,v,\Rsim,S)$ is the same as $\FindPointedL(u,v)$
  from Listing~\ref{lst:find-pointed}. 
  Moreover, $\FindPointedS(u,v,\Rsim,S)$ runs in time polynomial in the size of
  its arguments. 
\end{lem}
\begin{proof}
  Observe that the two loops run in polynomial time, as the number of iterations
  is bounded by the length of the longest second component of a pair in $S$. For the first loop this follows from the fact that the length $\bv$ grows in each iteration of the loop, and in the second loop $w$ grows with each iteration. Once the length of $\bv$, respectively $w\bv$, has exceeded the length of a longest second component in $S$, there cannot be a further $x$ satisfying the respective loop condition, because this would require pairs in $S$ whose second component contains $\bv$, respectively $w\bv$, as prefix.

  The tests $\sims$, $\napproxsbot$, and $\napproxs$ used in
  $\FindPointedS$ are stable under extensions and give the same results as the
  corresponding test with $L$ subscript provided $S$ contains enough samples.
  Proposition~\ref{prop:idealized-pointed-finding-correct} states that the
  result of $\FindPointedL$ is of size polynomial in  $|\Aae|$. 
  This implies that witnesses $x$ required by $\FindPointedS$ in
  lines~\ref{line:find-x-i} and~\ref{line:find-x-ii} exist and are of
  size polynomial in $|\Aae|$.  
  Moreover, every value of $\bv$ in the execution of the algorithm is also of
  size polynomial in $|\Aae|$. 
  For the algorithm to give the correct result the sample should have enough
  information about the witnesses.
  For every such witness $x$ from line~\ref{line:find-x-i} we need in $S$
  enough information for it to pass three tests.
  In particular, in line~\ref{line:find-m} we need only enough information in
  $S$ for the correct value of $m$, meaning the value algorithm 
  $\FindPointedL$ would use. 
  For the other values of $m$, if there is not enough information then the test will
  simply fail.
  By Remark~\ref{rem:poly-pairs} each test requires the presence in $S$ of a linear
  number of pairs wrt.\ the size of $|\Rsim|$.
  For every witness $x$ from line~\ref{line:find-x-ii} we also need witnesses
  for three tests.
  To summarize, in order for $\FindPointedS$ to give a correct answer only a
  polynomial number of pairs in $S$ is needed, and each needed pair is of
  polynomial length in $|\Aae|$.
\end{proof}

At this point we have all the ingredients for a proof of
Theorem~\ref{thm:learner}.

\noindent\emph{Proof of Theorem~\ref{thm:learner}:}
Consider $S$ consistent with $L$.
Suppose $S$ is stabilizing for every call of every function in the execution of
the algorithm $\Learn(S)$ from Listing~\ref{lst:palgo}, and moreover $S$
guarantees that the returned value is correct in the sense stated in the lemmas above.
Then the algorithm behaves exactly the same as the idealized algorithm from
Listing~\ref{lst:algo}. 
In particular, the size of $\Aa$ being constructed is not bigger than that of
$\Aae$.
This means that the size of $S$ needs not to be bigger than some polynomial in
the size of $\Aae$ for the algorithm $\Learn(S)$ to return the same result as the
idealized algorithm. 
As the idealized algorithm is correct thanks to Theorem~\ref{thm:idealized},
so is the result of $\Learn(S)$. 
Finally, thanks to the lemmas in this section, the running time of $\Learn(S)$
is guaranteed to be polynomial in the size of the sample $S$.

\section{Conclusion} \label{sec:conclusion}
The main technical result of this paper is a passive learning algorithm for
history-deterministic co-Büchi automata that is polynomial in both time and
data. This is the first such algorithm for a class that remains non-trivial when
restricted to prefix-independent $\w$-languages. With Example~\ref{ex:long-SCC-path},
we have highlighted the challenges concerning polynomial learning of $\w$-automata,
that can in part explain the slow progress in this area in the past.
The straightforward congruence-based approach encounters the problem of an
exponential number of congruence classes, and there are classes that can only be
identified with exponentially long examples. 

The other foundational contribution is a congruence-based description of minimal
history-deterministic co-Büchi automata. The main novelty here is the notion of
a "pointed" pair. Another subtlety is the definition of the $\tbe$ relation, which
is similar to those present in the literature but does not require the first
components to be $\siml$ equivalent. 

For future work, it would be interesting to go beyond the co-Büchi
condition. A good starting point are the \emph{chains of co-Büchi automata (COCOA)}
from \cite{EhlersS22}, a canonical representation of all
$\w$-regular languages as a Boolean combination of history-deterministic
co-Büchi automata. Another goal is to develop an active, Angluin-style~\cite{Angluin87} learning
algorithm for the co-Büchi class.
Finally, it would be interesting to have faster minimization algorithms for
history-deterministic co-Büchi automata.
As in the case of finite words, understanding their structure in terms of
congruences may be helpful.

\bibliographystyle{alphaurl}

\bibliography{references}





\end{document}